\let\OLDthebibliography\thebibliography
\renewcommand\thebibliography[1]{
  \OLDthebibliography{#1}
  \setlength{\itemsep}{0pt}
}
\newtheorem{theorem}{Theorem}
\newtheorem{lemma}[theorem]{Lemma}
\newtheorem{example}{Example}
\newtheorem{remark}{Remark}
\begin{document}

\def\myparagraph#1{\vspace{2pt}\noindent{\bf #1~~}}
\newcommand{\eqdef}{{\stackrel{\mbox{\tiny \tt ~def~}}{=}}}



\def\DeltaCeil{{\lceil\Delta\rceil}}
\def\TwoDeltaCeil{{\lceil 2\Delta\rceil}}
\def\OnePointFiveDeltaCeil{{\lceil 3\Delta/2\rceil}}

\long\def\ignore#1{}
\def\myps[#1]#2{\includegraphics[#1]{#2}}
\def\etal{{\em et al.}}
\def\Bar#1{{\bar #1}}
\def\br(#1,#2){{\langle #1,#2 \rangle}}
\def\setZ[#1,#2]{{[ #1 .. #2 ]}}
\def\Pr{\mbox{\rm Pr}}
\def\REACHED{\mbox{\tt REACHED}}
\def\AdjustFlow{\mbox{\tt AdjustFlow}}
\def\GetNeighbors{\mbox{\tt GetNeighbors}}
\def\true{\mbox{\tt true}}
\def\false{\mbox{\tt false}}
\def\Process{\mbox{\tt Process}}
\def\ProcessLeft{\mbox{\tt ProcessLeft}}
\def\ProcessRight{\mbox{\tt ProcessRight}}
\def\Add{\mbox{\tt Add}}

\def\maxsuperscript[#1]{\raisebox{#1}{\scriptsize $({\max})$}}

\def\bmin{{\beta_{\min}}}
\def\bmax{{\beta_{\max}}}

\def\bP{{\bf P}}

\def\E{{\mathbb E}}
\def\P{{\mathbb P}}
\def\X{{\mathbb X}}
\def\Y{{\mathbb Y}}
\def\Z{{\mathbb Z}}
\def\W{{\mathbb W}}
\def\U{{\mathbb U}}
\def\V{{\mathbb V}}
\def\S{{\mathbb S}}
\def\R{{\mathbb R}}

\def\Vrel(#1){{{\mathbb S}[{#1}]}}

\def\Xs{{\tiny \mbox{$\mathbb X$}}}
\def\Ys{{\tiny \mbox{$\mathbb Y$}}}
\def\Zs{{\tiny \mbox{$\mathbb Z$}}}
\def\Ws{{\tiny \mbox{$\mathbb W$}}}
\def\Us{{\tiny \mbox{$\mathbb U$}}}
\def\Vs{{\tiny \mbox{$\mathbb V$}}}
\def\Uts{{\tiny \mbox{$\mathbb U_t$}}}
\def\Ualphas{{\tiny \mbox{$\mathbb U_\alpha$}}}
\def\Ss{{\tiny \mbox{$\mathbb S$}}}
\def\Pis{{\tiny \mbox{$\Pi$}}}

\def\setof#1{{\left\{#1\right\}}}
\def\suchthat#1#2{\setof{\,#1\mid#2\,}} 
\def\event#1{\setof{#1}}
\def\q={\quad=\quad}
\def\qq={\qquad=\qquad}
\def\calA{{\cal A}}
\def\calB{{\bf B}}
\def\calC{{\cal C}}
\def\calD{{\cal D}}
\def\calE{{\cal E}}
\def\calF{{\cal F}}
\def\calG{{\cal G}}
\def\calI{{\cal I}}
\def\calH{{\cal H}}
\def\calL{{\cal L}}
\def\calN{{\cal N}}
\def\calP{{\cal P}}
\def\calQ{{\cal Q}}
\def\calR{{\cal R}}
\def\calS{{\cal S}}
\def\calT{{\cal T}}
\def\calU{{\cal U}}
\def\calV{{\cal V}}
\def\calO{{\cal O}}
\def\calX{{\cal X}}
\def\calY{{\cal Y}}
\def\calZ{{\cal Z}}
\def\psfile[#1]#2{}
\def\psfilehere[#1]#2{}
\def\epsfw#1#2{\includegraphics[width=#1\hsize]{#2}}
\def\assign(#1,#2){\langle#1,#2\rangle}
\def\edge(#1,#2){(#1,#2)}
\def\VS{\calV^s}
\def\VT{\calV^t}
\def\slack(#1){\texttt{slack}({#1})}
\def\barslack(#1){\overline{\texttt{slack}}({#1})}
\def\NULL{\texttt{NULL}}
\def\PARENT{\texttt{PARENT}}
\def\GRANDPARENT{\texttt{GRANDPARENT}}
\def\TAIL{\texttt{TAIL}}
\def\HEADORIG{\texttt{HEAD$\_\:$ORIG}}
\def\TAILORIG{\texttt{TAIL$\_\:$ORIG}}
\def\HEAD{\texttt{HEAD}}
\def\CURRENTEDGE{\texttt{CURRENT$\!\_\:$EDGE}}

\def\unitvec(#1){{{\bf u}_{#1}}}
\def\uvec{{\bf u}}
\def\vvec{{\bf v}}
\def\Nvec{{\bf N}}

\newcommand{\bg}{\mbox{$\bf g$}}
\newcommand{\bh}{\mbox{$\bf h$}}

\newcommand{\bx}{\mbox{\boldmath $x$}}
\newcommand{\by}{\mbox{\boldmath $y$}}
\newcommand{\bz}{\mbox{\boldmath $z$}}
\newcommand{\bu}{\mbox{\boldmath $u$}}
\newcommand{\bv}{\mbox{\boldmath $v$}}
\newcommand{\bw}{\mbox{\boldmath $w$}}
\newcommand{\bvarphi}{\mbox{\boldmath $\varphi$}}

\newcommand\myqed{{}}


\title{\Large\bf  \vspace{-30pt} A Faster Approximation Algorithm for the Gibbs Partition Function}
\author{Vladimir Kolmogorov \\ \normalsize Institute of Science and Technology Austria \\ {\normalsize\tt vnk@ist.ac.at}}
\date{}
\maketitle

\begin{abstract}
We consider the problem of estimating the partition function $Z(\beta)=\sum_x \exp(-\beta H(x))$
of a Gibbs distribution with a Hamilton $H(\cdot)$, or more precisely the logarithm of the ratio $q=\ln Z(0)/Z(\beta)$.
It has been recently shown how to approximate $q$ with high probability assuming
the existence of an oracle that produces samples from the Gibbs distribution for a given parameter value in $[0,\beta]$.
The current best known approach due to Huber~\cite{Huber:Gibbs} uses $O(q\ln n\cdot[\ln q + \ln \ln n+\varepsilon^{-2}])$ 
oracle calls on average where $\varepsilon$ is the desired accuracy of approximation and $H(\cdot)$ is assumed to lie in $\{0\}\cup[1,n]$.
We improve the complexity to $O(q\ln n\cdot\varepsilon^{-2})$ oracle calls.
We also show that the same complexity can be achieved if exact oracles are replaced with approximate sampling oracles
that are within $O(\frac{\varepsilon^2}{q\ln n})$
variation distance from exact oracles.
Finally, we prove a lower bound of $\Omega(q\cdot \varepsilon^{-2})$ oracle calls under a natural model of computation.
\end{abstract}

\section{Introduction}
It is known that for large classes of problems, e.g.\ {\em self-reducible problems}~\cite{Jerrum:86},
there is an intimate connection between approximate counting and sampling: the ability to solve
one problem allows solving the other one. This paper explores this connection for Gibbs distributions.

Let $\Omega$ be some finite set and $H(\cdot)$ be some real-valued function on $\Omega$
called a {\em Hamiltonian}. The {\em Gibbs distribution} for such a system is a family of distributions $\{\mu_\beta\}$ on $\Omega$ parameterized by $\beta$, where 
\begin{equation}
\mu_\beta(x)=\frac{1}{Z(\beta)} \exp(-\beta H(x))\qquad \forall x\in\Omega   \label{eq:Gibbs}
\end{equation}
The normalizing constant $Z(\beta)$ is called the {\em partition function}:
\begin{equation}
Z(\beta)=\sum_{x\in\Omega} \exp(-\beta H(x))
\end{equation}
Estimating this function for a given value of $\beta$ is a widely studied computational problem
with applications in many areas. 
In particular, it is a key computational task in statistical physics.
Evaluations of $Z(\cdot)$ yield estimates of important thermodynamical quantities, such as the free energy. 
Note, parameter $\beta$ corresponds to the {\em inverse temperature}.
A classical example of a Gibbs distribution in physics
is  the {\em Ising model}.
\begin{example}\label{ex:Ising}
Given an undirected graph $(V,E)$, let $\Omega=\{-1,+1\}^V$ and $H(x)=\sum_{\{i,j\}\in E} [x_i\ne x_j]$
where $[\cdot]$ is 1 if its argument is true, and 0 otherwise.
Distribution~\eqref{eq:Gibbs} for such a Hamiltonian is called the {\em Ising model}.
It is  {\em ferromagnetic} if $\beta>0$, and {\em antiferromagnetic} if $\beta<0$
(although in the latter case the function $H'(x)=-H(x)$ is usually treated as the Hamiltonian).
Computing $Z(\beta)$ exactly is a $\#$P-complete problem, and is even hard to approximate in the antiferromagnetic
case~\cite{JerrumSinclair:Ising}.
\end{example}
The problem of counting various combinatorial objects such as  proper $k$-coloring and matchings in graphs can also be naturally phrased
as estimating the partition function.
\begin{example}\label{ex:colorings}
Let $\Omega=\{1,\ldots,k\}^{|V|}$ be the set of all colorings in 
an undirected graph $G=(V,E)$.
Define  $H(x)=\sum_{\{i,j\}\in E}[x_i=x_j]$, then $Z(+\infty)$ gives the number of proper $k$-colorings.
\end{example}
\begin{example}\label{eq:matchings}
Let $\Omega$ be the set of matchings $M\subseteq E$ in 
an undirected graph $G=(V,E)$.
Define  $H(M)=|M|$, then $Z(0)=|\Omega|$.
\end{example}

A related problem is that of {\em sampling} from the distribution $\mu_\beta$ for a given
value of $\beta$. There is a vast literature on designing sampling algorithms from Gibbs distributions,
see e.g.~\cite{Metropolis,SwendsenWang,Huber:AAP04,FillHuber:Vervaat} or~\cite{MCMC:handbook} for an overview.
For the ferromagnetic Ising model there exists a polynomial-time approximate sampling algorithm~\cite{JerrumSinclair:Ising}
and also an exact sampling algorithm that appears to run efficiently at or above the {\em critical temperature}~\cite{ProppWilson:96}.
Approximate sampling of $k$-colorings in low-degree graphs is addressed in~\cite{Jerrum:colorings,Vigoda:colorings}
(for $\beta=+\infty$, though techniques are potentially extendable to other values of $\beta$),
and for matchings polynomial-time approximate sampling is described in~\cite[Section 2.3.5]{Matthews:PhD}.

It is known that the ability to sample can be used for designing
a {\em randomized
approximation scheme} for estimating the partition function. By definition, it is an algorithm that for a given $\varepsilon>0$ produces an estimate $\hat Q$ of the desired quantity $Q$ such
that $\hat Q\in\left[\frac{Q}{1+\varepsilon},Q(1+\varepsilon)\right]$ with probability at least $3/4$.
(The value $3/4$ is arbitrary: by repeating the algorithm multiple times and
taking the median of the outputs the probability can be boosted to any other constant in $(0,1)$).
This paper studies the following question: how many samples are needed to approximate $Z(\beta)$ with a given accuracy~$\varepsilon$?

\myparagraph{Formal description}
To state the complexity of different approaches, we need to introduce several quantities.
First, we assume that $H(x)\in\{0\}\cup[1,n]$ for any $x\in\Omega$ where $n$ is a known number.
Non-negativity of the Hamiltonian implies that $Z(\cdot)$ is a decreasing function.
Our goal will be to estimate the ratio  $Q=Z(\bmin)/Z(\bmax)$ for given values $\bmin<\bmax$.
Note that computing $Z(\beta)$ for some specific value of $\beta$ is usually an easy task, so this will allow estimating $Z(\beta)$ for any other $\beta$.
In particular,  in Examples~\ref{ex:Ising},~\ref{ex:colorings} and~\ref{eq:matchings}
we have 
 $Z(0)=2^{|V|}$, $Z(0)=k^{|V|}$ and $Z(+\infty)=1$
respectively.

Let us denote $q=\log Q$, and  assume that there exists an oracle that can produce a sample $X\sim\mu_\beta$ for a given value $\beta\in[\bmin,\bmax]$.
When stating asymptotic complexities, we will always assume that $q=\Omega(1)$, $n=1+\Omega(1)$ and $\varepsilon=O(1)$ to simplify the expressions.
Bez\'akov\'a et al.~\cite{Bezakova08} showed that $Q$ can be estimated
using $O(q^2 (\ln n)^2)$ oracle calls in the worst case (for a fixed $\varepsilon$). 
This was improved to $O(q (\ln q + \ln n)^5\varepsilon^{-2})$ expected number of calls
 by \v{S}tefankovi\v{c} et al.~\cite{Stefankovic:JACM09} 
and then to $O(q\ln n\cdot[\ln q + \ln \ln n+\varepsilon^{-2}])$ by Huber~\cite{Huber:Gibbs}.

The first contribution of this paper is to improve the complexity further to $O(q\ln n \cdot  \varepsilon^{-2}    )$
oracle calls (on average). This is achieved by a better analysis of the algorithm in~\cite{Huber:Gibbs}.
The formal statement of our result is given in Section~\ref{sec:main} as Theorems~\ref{th:main} and~\ref{th:main2}.

In many applications we only have an access to approximate sampling oracles.
Using a standard coupling argument, in Section~\ref{sec:approx} we show
that the same complexity can be achieved with approximate oracles assuming that they are within $O(\frac{\varepsilon^2}{q\ln n})$
variation distance from exact oracles.

As our final contribution, we prove a lower bound of $\Omega(q\cdot \varepsilon^{-2})$ oracle calls
under a natural model of computation. The precise statement of the result is given as Theorem~\ref{th:LowerBound} in Section~\ref{sec:LowerBound}.

\begin{remark}
The assumption that $H(\cdot)$ lies in $\{0\}\cup[1,n]$
can be relaxed using a standard trick. Suppose, for example, that $H(x)\in\{h_{\min},h_{\min}+1,\ldots,h_{\max}\}$
where $h_{\min}$ and $h_{\max}$ are known integers. Let $n=h_{\max}-h_{\min}$.
We claim that the problem can be solved using $O(q'\ln n\cdot \varepsilon^{-2})$
oracle calls (on average), where 
either (i)~$q' = q - (\bmax-\bmin)\cdot h_{\min}$,
or (ii)
$q' = -q + (\bmax-\bmin)\cdot h_{\max}$.

Indeed, to achieve the first complexity, define
new Hamiltonian $H'(x)=H(x)-h_{\min}$.
The partition function for the new problem is $Z'(\beta)=e^{\beta h_{\min}}\cdot Z(\beta)$, and so $q'$
is as defined in (i).
(We use ``primes'' to denote all quantities related to the new problem).
We have $H'(x)\in\{0,1,\ldots,n\}$, so the algorithm claimed above can be applied to give an estimate of $q'$
and thus of $q$. Note that distributions $\mu'_\beta$ and $\mu_\beta$ coincide, and so sampling from $\mu_\beta$ allows to sample from $\mu'_\beta$.

To achieve the second complexity, define $H'(x)=-H(x)+h_{\max}$ and also change the bounds: $\beta'_{\min}=-\bmax$ and 
$\beta'_{\max}=-\bmin$. There holds $Z'(\beta)=e^{-\beta h_{\max}}\cdot Z(-\beta)$, and $q'$ is as defined in~(ii).
We again have $H'(x)\in\{0,1,\ldots,n\}$, and distributions $\mu'_\beta$ and $\mu_{-\beta}$ coincide.
We can now use the same argument as before.
\end{remark}

\section{Background and preliminaries}
We will assume for simplicity that $H(\cdot)\ne const$.
Let us denote $z(\beta)=\ln Z(\beta)$. It can be easily checked that
$$
z'(\beta)=\E_{X\sim\mu_\beta} [ -H(X)]
$$
Since $H(\cdot)$ is non-negative and non-constant, we have $z'(\beta)<0$ for any $\beta$ and thus $z(\cdot)$ and $Z(\cdot)$ are
strictly decreasing functions. It is also known~\cite[Proposition 3.1]{WainwrightJordan} that function $z(\cdot)$ is convex for any $H(\cdot)$,
and in fact strictly convex if  $H(\cdot)\ne const$.

Next, we discuss previous approaches for estimating $Z(\bmin)/Z(\bmax)$,
closely following~\cite{Huber:Gibbs}. 

It is well-known that for given values $\beta_1,\beta_2$ an unbiased estimator $W$ of $Z(\beta_2)/Z(\beta_1)$
can be obtained as follows: first sample $X\sim\mu_{\beta_1}$ and then set $W=\exp((\beta_1-\beta_2)H(X))$. Indeed,
$$
\E[W]=\sum_{x\in\Omega}\frac{\exp(-\beta_1 H(x))}{Z(\beta_1)}\cdot\exp((\beta_1-\beta_2)H(x))
=\sum_{x\in\Omega}\frac{\exp(-\beta_2 H(x))}{Z(\beta_1)}
=\frac{Z(\beta_2)}{Z(\beta_1)}
$$
Applying this estimator directly to $(\beta_1,\beta_2)=(\bmin,\bmax)$ or to $(\beta_1,\beta_2)=(\bmax,\bmin)$ is problematic since it usually
has a huge relative variance. A standard approach to reduce the relative variance is  via  the {\em multistage sampling} method
of 
Valleau and Card~\cite{ValleauCard}. First, a sequence $\bmin=\beta_0\le \beta_1\le \ldots\le\beta_\ell=\bmax$ is selected; it
is called a {\em cooling schedule}.
We then have
\begin{equation*}
\frac{Z(\bmin)}{Z(\bmax)}=\frac{Z(\beta_0)}{Z(\beta_1)}\cdot \frac{Z(\beta_1)}{Z(\beta_2)}\cdot\ldots\cdot \frac{Z(\beta_{\ell-1})}{Z(\beta_{\ell})}
\end{equation*}
Throughout the paper we refer to $[\beta_{i},\beta_{i+1}]$ as ``interval $i$'', where $i\in\{0,1,\ldots,\ell-1\}$.
The ratio $Z(\beta_{i})/Z(\beta_{i+1})$ for each such interval can be estimated independently as described above,
and then multiplied to give the final estimate. Fishman calls an estimate of this form
a {\em product estimator}~\cite{Fishman:94}.
Its mean and variance are given by the lemma below.
In this lemma  we use the following notation:
if $X$ is a random variable then  $\Vrel(X)\eqdef\frac{\E[X^2]}{(\E[X])^2}=\frac{{\tt Var}(X)}{(\E[X])^2}+1$
(the relative variance of $X$ plus 1).

\begin{lemma}[{\cite[page 136]{DyerFrieze:91}}] For $P=\prod_i P_i$ where the $P_i$ are independent,
$$
\E[P]=\prod_i \E[P_i], \qquad \Vrel(P) = \prod_i \Vrel(P_i)
$$
\end{lemma}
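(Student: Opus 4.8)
The plan is to verify the two identities directly from the definitions, using the single structural fact that measurable functions of independent random variables are again independent. The first identity, $\E[P]=\E[\prod_i P_i]=\prod_i\E[P_i]$, is just the standard multiplicativity of expectation over independent factors, so I would simply invoke it.

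For the relative variance I would work from $\Vrel(P)\eqdef\frac{\E[P^2]}{(\E[P])^2}$. The key step is to observe that since $P_1,P_2,\ldots$ are independent, the squares $P_1^2,P_2^2,\ldots$ are independent as well, so that $\E[P^2]=\E\bigl[\prod_i P_i^2\bigr]=\prod_i\E[P_i^2]$. For the denominator I use the first identity: $(\E[P])^2=\bigl(\prod_i\E[P_i]\bigr)^2=\prod_i(\E[P_i])^2$. Dividing termwise then yields
$$
\Vrel(P)=\frac{\prod_i\E[P_i^2]}{\prod_i(\E[P_i])^2}=\prod_i\frac{\E[P_i^2]}{(\E[P_i])^2}=\prod_i\Vrel(P_i),
$$
which is exactly the claim. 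The alternative form $\Vrel(X)=\frac{{\tt Var}(X)}{(\E[X])^2}+1$ recorded in the text follows from $\E[X^2]={\tt Var}(X)+(\E[X])^2$ and is not needed for the argument itself.

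There is no real obstacle here; the proof is a two-line computation. The only point deserving a word of care is the implicit finiteness of the second moments $\E[P_i^2]$, which is what makes the rearrangement legitimate. In the intended application each factor has the bounded form $P_i=\exp((\beta_i-\beta_{i+1})H(X))$ with $H(\cdot)\in\{0\}\cup[1,n]$, so all moments are finite and the manipulation is unconditionally valid; moreover, if some $\E[P_i^2]$ were infinite then both sides of the variance identity would equal $+\infty$ and the statement would hold trivially. Since the index set is finite, no convergence issue beyond this arises.
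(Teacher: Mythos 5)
Your proof is correct; the paper states this lemma only as a citation to Dyer and Frieze without reproducing an argument, and your two-line computation---multiplicativity of expectation applied to both $P$ and $P^2$, using that the $P_i^2$ are again independent, then dividing---is exactly the standard argument the citation stands in for. The one caveat worth retaining alongside your finiteness remark is that $\Vrel(\cdot)$ requires $\E[P_i]\neq 0$, which is automatic in the intended application since each $P_i$ is a positive exponential.
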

 
Using a fixed cooling schedule, Bez\'akov\'a et al.~\cite{Bezakova08} obtained an approximation algorithm that
needs $O(q^2 (\ln n)^2)$ samples in the worst case (for a fixed $\varepsilon$). 
\v{S}tefankovi\v{c} et al.~\cite{Stefankovic:JACM09} asymptotically improved this  to $10^8 q (\ln q + \ln n)^5\varepsilon^{-2}$ 
samples on average. They used an {\em adaptive} cooling schedule where the values
$\beta_i$ depend on the outputs of sampling oracles.
A further improvement to $O(q\ln n\cdot[\ln q + \ln \ln n+\varepsilon^{-2}])$ was given by Huber~\cite{Huber:Gibbs}.
One of the key ideas in~\cite{Huber:Gibbs} was to replace the product estimator with
the {\em paired product} estimator, which is described next.

\subsection{Paired product estimator}
One run of this estimator can be described as follows:
\begin{itemize}
\item sample $X_i\sim\mu_{\beta_i}$ for each $i\in[0,\ell]$
\item for each interval $i\in[0,\ell-1]$ compute
$$
W_i = \exp(-\,\mbox{$\frac{\beta_{i+1}-\beta_i}2$}\; H(X_i)),\qquad 
V_i = \exp (\mbox{$\frac{\beta_{i+1}-\beta_i}2$}\;  H(X_{i+1}))
$$
\item compute $W=\prod_i W_i$ and $V=\prod_i V_i$.
\end{itemize}
An easy calculation (see~\cite{Huber:Gibbs}) shows that
$$
\E[W_i]=\frac{Z(\bar\beta_{i,i+1})}  {Z(\beta_{i})},\quad\;\;
\E[V_i]=\frac{Z(\bar\beta_{i,i+1})}{Z(\beta_{i+1})},\quad\;\;
\E[V_i]/\E[W_i] = \frac{Z(\beta_{i})}{Z(\beta_{i+1})},\quad\;\;
\E[V]/\E[W] = \frac{Z(\bmin)}{Z(\bmax)}
$$
where we denoted $\bar\beta_{i,i+1}=\frac{\beta_i+\beta_{i+1}}2$.
Also,
\begin{equation}
\Vrel(W_i)=\Vrel(V_i)=\frac{Z(\beta_i)Z(\beta_{i+1})}{Z(\bar\beta_{i,i+1})^2},\qquad
\Vrel(W)=\Vrel(V)=\prod_i \frac{Z(\beta_i)Z(\beta_{i+1})}{Z(\bar\beta_{i,i+1})^2} \label{eq:SW}
\end{equation}
Although $\E[V]/\E[W] = \frac{Z(\bmin)}{Z(\bmax)}=Q$, using  $V/W$ as the estimator of $Q$
would be a poor choice since it is biased in general.
Instead,~\cite{Huber:Gibbs} uses the following procedure.

\begin{algorithm}[H]
\DontPrintSemicolon
compute $r$  independent samples of $(W,V)$ as described above\;
take their sample averages $\bar W$ and $\bar V$ and output $\hat Q=\bar V/\bar W$ as the estimator of $Q$
\caption{Paired product estimator. {\bf Input}:  schedule $(\beta_0,\ldots,\beta_\ell)$, integer $r\ge 1$. \label{alg:wrapper}}
\end{algorithm}
 
The argument from~\cite{Huber:Gibbs} gives the following result.
\begin{lemma}\label{lemma:PairedProduct}
Suppose that 
\begin{eqnarray}
\Vrel(W) = \Vrel(V)  &\le& \mbox{$   1+ \frac{1}{2}\gamma r\tilde\varepsilon^{\,2}    $}
\label{eq:Scondition}
\end{eqnarray}
where $\tilde\varepsilon=1-(1+\varepsilon)^{-1/2}=\frac{1}{2}\varepsilon + O(\varepsilon^2)$ and $\gamma>0$.
Then $\P(\hat Q/Q\in (\frac{1}{1+\varepsilon},1+\varepsilon))\ge 1-\gamma$.
\end{lemma}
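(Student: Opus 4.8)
The goal is to control the probability that the ratio estimator $\hat Q = \bar V/\bar W$ deviates from $Q$ by more than a $(1+\varepsilon)$ factor, given the variance bound \eqref{eq:Scondition}. The natural route is to split the event $\{\hat Q/Q\notin(\frac{1}{1+\varepsilon},1+\varepsilon)\}$ into a ``numerator is too small'' event and a ``denominator is too small'' event, and bound each by Chebyshev. Concretely, I would first normalize: since $\hat Q/Q = (\bar V/\E[V])/(\bar W/\E[W])$, write $\hat V = \bar V/\E[V]$ and $\hat W = \bar W/\E[W]$, so $\E[\hat V]=\E[\hat W]=1$ and $\hat Q/Q = \hat V/\hat W$. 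Then $\hat Q/Q \le 1+\varepsilon$ holds whenever $\hat V \le (1+\varepsilon)^{1/2}$ and $\hat W \ge (1+\varepsilon)^{-1/2}$, and symmetrically $\hat Q/Q \ge (1+\varepsilon)^{-1}$ holds whenever $\hat W \le (1+\varepsilon)^{1/2}$ and $\hat V \ge (1+\varepsilon)^{-1/2}$. Hence the bad event is contained in $\{\hat V \notin [(1+\varepsilon)^{-1/2},(1+\varepsilon)^{1/2}]\} \cup \{\hat W \notin [(1+\varepsilon)^{-1/2},(1+\varepsilon)^{1/2}]\}$.

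**Key steps.** The cleanest packaging uses $\tilde\varepsilon = 1-(1+\varepsilon)^{-1/2}$: note $(1+\varepsilon)^{-1/2} = 1-\tilde\varepsilon$ and $(1+\varepsilon)^{1/2} = (1-\tilde\varepsilon)^{-1} \ge 1+\tilde\varepsilon$, so the interval $[(1+\varepsilon)^{-1/2},(1+\varepsilon)^{1/2}]$ contains $[1-\tilde\varepsilon,1+\tilde\varepsilon]$. Therefore $\{\hat V\notin[(1+\varepsilon)^{-1/2},(1+\varepsilon)^{1/2}]\}\subseteq\{|\hat V-1|\ge\tilde\varepsilon\}$ and likewise for $\hat W$. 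Now Chebyshev gives $\P(|\hat V-1|\ge\tilde\varepsilon) \le \frac{{\tt Var}(\hat V)}{\tilde\varepsilon^{2}}$. Since $\bar V$ is an average of $r$ i.i.d.\ copies of $V$, ${\tt Var}(\hat V) = \frac{1}{r}\cdot\frac{{\tt Var}(V)}{\E[V]^2} = \frac{1}{r}(\Vrel(V)-1)$. Under \eqref{eq:Scondition} this is at most $\frac{1}{r}\cdot\frac{1}{2}\gamma r\tilde\varepsilon^{\,2} = \frac{1}{2}\gamma\tilde\varepsilon^{\,2}$, so $\P(|\hat V-1|\ge\tilde\varepsilon)\le\frac{\gamma}{2}$. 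The identical bound holds for $\hat W$ since $\Vrel(W)=\Vrel(V)$. A union bound then yields $\P(\hat Q/Q\notin(\frac{1}{1+\varepsilon},1+\varepsilon)) \le \gamma$, which is the claim.

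**Main obstacle.** The only genuinely delicate point is the deterministic containment argument — verifying that the pair of two-sided deviation events on $\hat V$ and $\hat W$ really does cover the failure event of the ratio, and that collapsing the asymmetric interval $[(1+\varepsilon)^{-1/2},(1+\varepsilon)^{1/2}]$ down to the symmetric $[1-\tilde\varepsilon,1+\tilde\varepsilon]$ loses nothing. Both reduce to the elementary inequality $(1+\varepsilon)^{1/2}\ge 1+\tilde\varepsilon$, i.e.\ $(1-\tilde\varepsilon)^{-1}\ge 1+\tilde\varepsilon$, which holds for all $\tilde\varepsilon\in[0,1)$. Everything after that — the i.i.d.\ variance scaling by $1/r$, the substitution of \eqref{eq:Scondition}, and the union bound — is routine. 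One should also note that the argument uses the hypothesis $\Vrel(W)=\Vrel(V)$ only to avoid carrying two separate bounds; a priori it would suffice that each is at most $1+\frac12\gamma r\tilde\varepsilon^{\,2}$.
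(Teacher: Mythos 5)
Your proof is correct and follows essentially the same route as the paper: normalize $\bar V,\bar W$ by their means, apply Chebyshev with threshold $\tilde\varepsilon$ using ${\tt Var}(\bar V)/\E[V]^2=\frac1r(\Vrel(V)-1)\le\frac12\gamma\tilde\varepsilon^{\,2}$, take a union bound, and use the elementary containment $(1-\tilde\varepsilon,1+\tilde\varepsilon)\subseteq\bigl((1+\varepsilon)^{-1/2},(1+\varepsilon)^{1/2}\bigr)$ to pass from the deviation events to the ratio bound. The only cosmetic remark is that your intermediate containment phrased with closed intervals is loose at the boundary (it only yields $\hat Q/Q\in[\frac1{1+\varepsilon},1+\varepsilon]$), but since your Chebyshev events are the non-strict $\{|\cdot-1|\ge\tilde\varepsilon\}$, the final containment holds exactly as in the paper.
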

\begin{proof}
We have $\E[\bar W]=\E[W]$ and ${\tt Var}(\bar W)=\frac{1}{r}{\tt Var}(W)$, and so
 $\Vrel(\bar W)=\frac{1}{r}(\Vrel(W)-1)+1$.
By Chebyshev's inequality, 
$\P(|\bar W/\E[\bar W]-1|\ge \tilde\varepsilon ) \le (\Vrel(\bar W)-1) / \tilde\varepsilon^{\,2} = \frac{1}{r} (\Vrel( W)-1) / \tilde\varepsilon^{\,2} \le \gamma/2$.
Similarly, $\P(|\bar V/\E[\bar V]-1|\ge \tilde\varepsilon ) \le \gamma/2$.

Denote $S=\bar W/\E[\bar W]$ and $T=\bar V/\E[\bar V]$. The union bound gives $\P(\max\{|S-1|,|T-1|\}\ge \tilde\varepsilon) \le \gamma$.
Observe that condition $\max\{|S-1|,|T-1|\}<\tilde\varepsilon$ implies $\{S,T\}\subset(1-\tilde\varepsilon,1+\tilde\varepsilon)\subseteq(\frac{1}{(1+\varepsilon)^{1/2}},(1+\varepsilon)^{1/2})$ 
and thus $\frac{\hat Q}Q=\frac{T}{S}\in(\frac{1}{1+\varepsilon},1+\varepsilon)$.
The claim follows.
\end{proof}

Recall that $\Vrel(W)= \Vrel(V)$ is a deterministic function of the schedule $(\beta_0,\ldots,\beta_\ell)$ (see eq.~\eqref{eq:SW}).
We say that the schedule is {\em good} (with respect to fixed constants $r$ and $\gamma$) if the resulting quantity $\Vrel(W)= \Vrel(V)$ satisfies~\eqref{eq:Scondition}.
Huber presented in~\cite{Huber:Gibbs} a randomized algorithm that produces a good
schedule with probability at least $0.95$ (with respect to $r=\Theta(\varepsilon^{-2})$ and $\gamma=0.2$).
By Lemma~\ref{lemma:PairedProduct}, the output $\hat Q$ of the resulting algorithm lies in $(\frac{Q}{1+\varepsilon},Q(1+\varepsilon))$
with probability at least $0.95\cdot(1-\gamma)>0.75$, as desired.

Huber's algorithm for producing schedule $(\beta_0,\ldots,\beta_\ell)$
is reviewed in the next section.
It makes  $O(q\ln n\cdot [\ln q + \ln \ln n])$ calls to the sampling oracle (on average).
Then in Section~\ref{sec:main} we will describe how to reduce the number of oracle calls
to $O(q\ln n)$
while maintaining the desired guarantees.

\subsection{TPA method}
The algorithm of~\cite{Huber:Gibbs} for producing a schedule is based on the {\em TPA method} of Huber and Schott~\cite{TPA,TPA:journal}.
(The abbreviation stands for the ``Tootsie Pop Algorithm''). Let us review the application of the method
to the Gibbs distribution with a non-negative Hamiltonian $H(\cdot)$.

Its key subroutine is procedure ${\tt TPAstep}(\beta)$ that for a given constant $\beta$
produces a random variable in $[\beta,+\infty]$ as follows:
\begin{itemize}
\item \em
sample $X\!\sim\!\mu_\beta$, draw $U\in[0,1]$ uniformly at random, return $\beta-\ln U/H(X)$
(or $+\infty$ if $H(X)\!=\!0$).
\em
\end{itemize}



The motivation for this sampling rule comes from the following  fact (which we  prove here for completeness).
\begin{lemma}\label{lemma:TPAstep} Consider random variable $U=Z({\tt TPAstep}(\beta))$.
If  $H(\cdot)$ is strictly positive (implying that $Z(+\infty)=0$)
then $U$ has the uniform distribution on $[0,Z(\beta)]$.
If $H(x)=0$ for some $x\in\Omega$ (implying that $Z(+\infty)>0))$
then $U$ has the same distribution as the following random variable $U'$:
sample $U'\in[0,Z(\beta)]$ uniformly at random and set $U'\leftarrow\max\{U',Z(+\infty)\}$.
\end{lemma}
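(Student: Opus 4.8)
The plan is to compute the cumulative distribution function of $U=Z({\tt TPAstep}(\beta))$ directly and match it against the two claimed laws. Write $\beta'={\tt TPAstep}(\beta)$, so that $U=Z(\beta')$. First I would record the structural facts about $Z$ that the argument rests on: $Z(\cdot)$ is a finite sum of the continuous functions $\beta\mapsto e^{-\beta H(x)}$, hence continuous, and $Z(\beta)\to Z(+\infty)$ as $\beta\to+\infty$; moreover $z'(\beta)<0$ everywhere (noted in the preliminaries), so $Z$ is strictly decreasing. Hence $Z$ is a continuous decreasing bijection from $[\beta,+\infty]$ onto $[Z(+\infty),Z(\beta)]$, with a well-defined inverse $Z^{-1}$ on that range.

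Next I would condition on the sample $X$. Given $X=x$ with $H(x)=h>0$ we have $\beta'=\beta-\ln U/h$, and since $U$ is uniform on $[0,1]$, for any $\gamma\ge\beta$ one gets $\Pr(\beta'\ge\gamma\mid X=x)=\Pr(U\le e^{-h(\gamma-\beta)})=e^{-h(\gamma-\beta)}$; given $X=x$ with $H(x)=0$ we have $\beta'=+\infty$, so $\beta'\ge\gamma$ holds for every finite $\gamma$. Now fix $t\in(Z(+\infty),Z(\beta)]$ and set $\gamma=Z^{-1}(t)\in[\beta,+\infty)$. Strict monotonicity and continuity of $Z$ give $\{U\le t\}=\{Z(\beta')\le t\}=\{\beta'\ge\gamma\}$, and averaging the conditional probabilities against $\mu_\beta(x)=e^{-\beta H(x)}/Z(\beta)$ yields
$$
\Pr(U\le t)=\frac{1}{Z(\beta)}\left(\sum_{x:\,H(x)=0}1\;+\;\sum_{x:\,H(x)>0}e^{-\beta H(x)}e^{-H(x)(\gamma-\beta)}\right)=\frac{1}{Z(\beta)}\sum_{x\in\Omega}e^{-\gamma H(x)}=\frac{Z(\gamma)}{Z(\beta)}=\frac{t}{Z(\beta)}\,.
$$

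It remains to assemble the two cases. If $H(\cdot)$ is strictly positive, then $Z(+\infty)=0$ and $\beta'<+\infty$ almost surely, so $U\in(0,Z(\beta)]$; the display gives $\Pr(U\le t)=t/Z(\beta)$ for all $t\in(0,Z(\beta)]$, which together with $\Pr(U\le 0)=0$ is precisely the uniform law on $[0,Z(\beta)]$. If $H(x)=0$ for some $x$, then $U\ge Z(+\infty)$ always and $\Pr(U=Z(+\infty))=\Pr(\beta'=+\infty)=\Pr(H(X)=0)=Z(+\infty)/Z(\beta)$, while the display gives $\Pr(U\le t)=t/Z(\beta)$ for $t\in[Z(+\infty),Z(\beta)]$; this is exactly the distribution of $U'=\max\{U'',Z(+\infty)\}$ with $U''$ uniform on $[0,Z(\beta)]$, since for such $t$ the events $\{U'\le t\}$ and $\{U''\le t\}$ coincide. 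The only subtle points are the bookkeeping at the endpoints $t=Z(+\infty)$ and $t=Z(\beta)$ — in particular correctly attributing to $Z(+\infty)$ the atom that $\{\beta'=+\infty\}$ produces — and the invertibility of $Z$ on the relevant range; neither is a real obstacle, but both rely on the continuity and monotonicity remarks from the first step.
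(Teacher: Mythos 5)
Your proposal is correct and follows essentially the same route as the paper: condition on $X$, use the uniformity of the auxiliary variable to get $\P({\tt TPAstep}(\beta)\ge\gamma\mid X=x)=e^{-H(x)(\gamma-\beta)}$, and sum against $\mu_\beta$ to obtain $\P({\tt TPAstep}(\beta)\ge\gamma)=Z(\gamma)/Z(\beta)$. The only difference is that you spell out the translation from this survival function to the law of $U=Z(\beta')$ (monotonicity of $Z$, the endpoint atom at $Z(+\infty)$), which the paper compresses into the phrase ``it suffices to prove.''
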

\begin{proof}
It suffices to prove 
$\P({\tt TPAstep}(\beta)\ge \alpha)=Z(\alpha)/Z(\beta)$
for any $\alpha\in[\beta,+\infty)$.
We have
 $$[{\tt TPAstep}(\beta)\ge \alpha]=[\ln U/H(X)\le \beta-\alpha]=[\ln U\le (\beta-\alpha)H(X)]=[U\le \exp((\beta-\alpha)H(X))]$$
Therefore,
\begin{eqnarray*}
\P({\tt TPAstep}(\beta)\ge \alpha)
&=&\sum_{x\in\Omega}\P({\tt TPAstep}(\beta)\ge \alpha|X=x)\P(X=x) \\
&=&\sum_{x\in\Omega}\P(U\le \exp((\beta-\alpha)H(x)))\cdot\frac{\exp(-\beta H(x))}{Z(\beta)} \\
&=&\sum_{x\in\Omega}\exp((\beta-\alpha)H(x))\cdot\frac{\exp(-\beta H(x))}{Z(\beta)} \\
&=&\sum_{x\in\Omega}\frac{\exp(-\alpha H(x))}{Z(\beta)} \;\; = \;\; \frac{Z(\alpha)}{Z(\beta)}
\end{eqnarray*}
\end{proof}


Roughly speaking, the TPA method counts how many steps are needed to get from $\bmin$ to $\bmax$.

\begin{algorithm}[H]
\DontPrintSemicolon
set $\beta_0 = \bmin$, let $\calB$  be the empty multiset\;
\For{$i=1$ {\bf to} $+\infty$}{
	sample $\beta_{i}={\tt TPAstep}(\beta_{i-1})$ \;
	if $\beta_{i}\in[\bmin,\bmax]$ then add $\beta_{i}$ to $\calB$, otherwise output $\calB$ and terminate
}
\caption{One run of TPA. {\bf Output:} a multiset $\calB$ of values in the interval  ${[}\bmin,\bmax{]}$. 
 \label{alg:TPA}
}
\end{algorithm}
The output of Algorithm~\ref{alg:TPA} will be denoted as ${\tt TPA}(1)$,
and the union of $k$ independent runs of ${\tt TPA}(1)$ as ${\tt TPA}(k)$.
For a multiset $\calB$ we define multiset $z(\calB)\eqdef\{z(\beta)\:|\:\beta\in\calB\}$ in a natural way.
 (Recall that $z(\cdot)$ is a continuous strictly decreasing function).
It is known~\cite{TPA,TPA:journal} that $z({\tt TPA}(k))$ is a Poisson Point Process (PPP) on $[z(\bmax),z(\bmin)]$ of rate~$k$,
starting from $z(\bmin)$ and going downwards.
In other words, the random variable ${\bz}=z({\tt TPA}(k))$ is generated by the following process.

\begin{algorithm}[H]
\DontPrintSemicolon
set $z_0=z(\bmin)$, let ${\bz}$ be the empty multiset \;
\For{$i=1$ {\bf to} $+\infty$}{
draw $\eta$ from the exponential distribution of rate $k$ (and with the mean $\frac{1}{k}$), set $z_i=z_{i-1}-\eta$\;
if $z_i\in[z(\bmax),z(\bmin)]$ then add $z_i$ to ${\bz}$, otherwise output ${\bz}$ and terminate
}
\caption{Equivalent process for generating $z({\tt TPA}(k))$.\label{alg:PPP}}
\end{algorithm}

One way to use the TPA method is to simply count the number of points in ${\tt TPA}(k)$.
Indeed, $|{\tt TPA}(k)|$ is distributed according to the Poisson distribution with rate $k\cdot(z(\bmin)-z(\bmax))=k\cdot q$,
so $\frac{1}{k}|{\tt TPA}(k)|$ is an unbiased estimator of $q$.
Unfortunately, obtaining a good estimate of $q$ with this approach requires a fairly large number of samples, namely  
$O(q^2)$ for a given accuracy and  the probability
of failure~\cite{TPA,TPA:journal}. A better application of TPA
was proposed in~\cite{Huber:Gibbs},
where the method was used for generating a schedule $(\beta_0,\ldots,\beta_\ell)$ as follows.

\begin{algorithm}[H]
\DontPrintSemicolon
sample $\calB\sim {\tt TPA}(k)$  \;
sort the values in $\calB$ and then keep every $d$th successive value \;
add values $\bmin$ and $\bmax$ and output the resulting sequence $(\beta_0,\ldots,\beta_\ell)=(\bmin,\ldots,\bmax)$
\caption{Generating a schedule $(\beta_0,\ldots,\beta_\ell)$. {\bf Input:} integers $k,d\ge 1$.\label{alg:sort}}
\end{algorithm}

Note that the resulting sequence $(z_1,\ldots,z_{\ell-1})$=$(z(\beta_1),\ldots,z(\beta_{\ell-1}))$
can be described by a process in Algorithm~\ref{alg:PPP}
where $\eta$ is drawn as the sum of $d$ exponential distributions each of rate $k$;
this is the gamma (Erlang) distribution with shape parameter $d$ and rate parameter $k$.

Huber showed in~\cite{Huber:Gibbs} that if $d=\Theta(\ln q+\ln\ln n)$ and $k=\Theta(d\ln n)$ (with appropriate constants)
then Algorithm~\ref{alg:sort} produces a good schedule with high probability.
Since $q$ is unknown in practice, \cite{Huber:Gibbs}~uses a two-stage procedure: first an estimate $\hat q=2\cdot{\frac{ |{\tt TPA}(5)|}5}+1$
is computed, which is shown to be an upper bound on $q$ with probability at least $0.99$. This estimate is then used for setting $d$ and~$k$.

In the next section we prove that the algorithm has desired guarantees for smaller parameter values, namely $d=\Theta(1)$ and $k=\Theta(\ln n)$.
This allows to reduce the complexity of Algorithm~\ref{alg:sort} by a factor of $\Theta(\ln q+\ln\ln n)$,
and also eliminates the need for a two-stage procedure.

\section{Our results}\label{sec:main}
For technical reasons we will need to make the following assumption for line 2 of Algorithm~\ref{alg:sort}:
if $\beta_1,\beta_2,\ldots$ is the sorted sequence of points in $\calB$ then the index of the first point
to be taken is sampled uniformly from $\{1,\ldots,d\}$ (and after that the index is always incremented by $d$).

Denote $m=\frac{k}d$ and $z_i=z(\beta_i)$ for $i\in[0,\ell]$.
We treat $m$ and $d$ as being fixed, and $k=md$ as their function.
Also let  $\delta=\ln\Vrel(W)=\ln\Vrel(V)$. From~\eqref{eq:SW} we get
\begin{eqnarray}
\delta=\sum_i \delta_i\;, \mbox{~~~~~~~~~~~} \delta_i=z(\beta_{i}) - 2\mbox{$z\left(\frac{\beta_i+\beta_{i+1}}{2}\right)$} + z(\beta_{i+1}).
\end{eqnarray}
Since $z(\cdot)$ is convex, we have $\delta_i\ge 0$ for all $i$.

\myparagraph{\fbox{Case I: $H(x)\in[1,n]$ for all $x\in\Omega$}} First, let us assume that $H(\cdot)$ does not take value $0$.
In this case the proofs become somewhat simpler, and we will get slightly smaller constants.

Huber showed that for  $d=\Theta(\ln(q\ln n))$  
the schedule is {\em well-balanced} with probability $\Theta(1)$,
meaning that {\bf all} intervals $i$ satisfy $z_i-z_{i+1}\le \tau\cdot\frac{1}{m}$ for a constant $\tau=\frac{4}{3}$.
(Note that $\E[z_i-z_{i+1}]\approx\frac{1}m$, ignoring boundary effects).
It was then
%
%
proved\footnote{More precisely, this is what the argument of~\cite{Huber:Gibbs} would give assuming that $H(\cdot)$ does not take value $0$.} 
that a well-balanced schedule satisfies $\delta\le \frac\tau 2 \cdot \frac{\ln n}{m}$,
leading to condition~\eqref{eq:Scondition}.
We improve on this result as follows.

Choose a constant $\tau>0$ (to be specified later), and say that interval $i$ is {\em large} if $z_i-z_{i+1}>\tau\cdot\frac{1}{m}$, and {\em small} otherwise.
Let $\delta^+$ be the sum of $\delta_i$ over large intervals and $\delta^-$ be the sum of $\delta_i$ over small intervals
(so that $\delta=\delta^++\delta^-$).
In Section~\ref{sec:lemma:delta-bound:proof} we prove the following fact. 
(Recall that $\delta^+,\delta^-$ are deterministic functions of the schedule $(\beta_0,\ldots,\beta_\ell)$).
\begin{lemma}\label{lemma:delta-bound}
There holds $\delta^-\le\frac{\tau}2\cdot\frac{\ln n}m$ and 
$
\E[\delta^+]\le  \frac{\Gamma(d+2,\tau d)}{2d\;\cdot \;d!}\cdot\frac{\ln n}m
$
for the schedule $(\beta_0,\ldots,\beta_\ell)$ produced by Algorithm~\ref{alg:sort} with parameters $k=md$ and $d$,
where $\Gamma(\cdot,\cdot)$ is the upper incomplete gamma function:
$$
\Gamma(a,b)=\int_b^{+\infty}t^{a-1}e^{-t}dt\qquad (\mbox{with }\Gamma(a,0)=\Gamma(a)=(a-1)!)
$$
\end{lemma}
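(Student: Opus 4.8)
The plan is to bound $\delta^-$ deterministically and $\delta^+$ in expectation, using the convexity of $z(\cdot)$ together with the distributional description of the schedule in Algorithm~\ref{alg:PPP}.

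\textbf{Bounding $\delta^-$.} Fix an interval $i$ and write $a = z_i - z_{i+1} = \beta_{i+1}-\beta_i$ worth of ``$z$-drop'' (not literally, but I will work in $z$-coordinates). The key elementary fact is that for a convex function $z(\cdot)$, the second-difference term $\delta_i = z(\beta_i) - 2z(\bar\beta_{i,i+1}) + z(\beta_{i+1})$ can be upper bounded in terms of the slope variation. Concretely, I would use that $z'(\beta) = -\E_{\mu_\beta}[H(X)] \in [-n,-1]$ always, so $z'$ ranges in an interval of length $n-1 < n$. Since $z$ is convex, $\delta_i = \int_{\beta_i}^{\bar\beta_{i,i+1}} (z'(t + \frac{\beta_{i+1}-\beta_i}{2}) - z'(t))\,dt$, and bounding the integrand by the total slope variation over $[\beta_i,\beta_{i+1}]$, which is at most $|z'(\beta_{i+1}) - z'(\beta_i)|$, we get $\delta_i \le \frac{\beta_{i+1}-\beta_i}{2}\cdot |z'(\beta_{i+1})-z'(\beta_i)|$. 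A cleaner route: by convexity $\delta_i \le \frac12 (z_i - z_{i+1})\cdot(\text{something})$... Actually the slickest bound is $\delta_i \le \frac12 (z_i-z_{i+1})\cdot \frac{\ln n}{?}$. Let me instead argue directly: since $-z'$ lies in $[1,n]$, for any $\beta<\beta'$ we have $\frac{z(\beta)-z(\beta')}{\beta'-\beta}\in[1,n]$, hence $\beta'-\beta \le z(\beta)-z(\beta')$ and $\beta'-\beta \ge \frac{z(\beta)-z(\beta')}{n}$. Convexity gives $\delta_i \le (\beta_{i+1}-\beta_i)\cdot\frac{z'(\beta_{i+1}^-)-z'(\beta_i^+)}{2}\cdot$ -- I will use the clean inequality $\delta_i\le \frac12(z_i-z_{i+1})\ln n$ scaled appropriately. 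The point is: each $\delta_i$ is at most $\frac{\ln n}{2}$ times $(z_i-z_{i+1})$ after normalizing by the slope range, and for small intervals $z_i - z_{i+1}\le \tau/m$, so summing over small intervals telescopes to at most $\frac{\tau}{2}\cdot\frac{\ln n}{m}$ since the $z_i-z_{i+1}$ over \emph{all} intervals sum to $q \le$ -- wait, that would give $\frac{\tau}{2m}\ln n$ only if there is at most one small interval. The correct accounting must be: $\sum_{i \text{ small}} \delta_i \le \frac{\ln n}{?}\sum_{i \text{ small}}(z_i-z_{i+1})^2/(z_i-z_{i+1})$... I need $\delta_i \le \frac12 (z_i-z_{i+1}) \cdot c_i$ where $\sum c_i \le \ln n$. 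Indeed take $c_i = $ slope variation of $z$ on interval $i$; then $\sum_i c_i = (-z'(\bmax)) - (-z'(\bmin)) \le n - 1 < n$... but we want $\ln n$, not $n$. So the right coordinate is $t = \ln(-z'(\beta))$ or similar. I will reparametrize so that the total variation of the relevant slope quantity across $[\bmin,\bmax]$ is $\ln n$; this is exactly where the $\{0\}\cup[1,n]$ assumption enters. Then $\delta^- = \sum_{i\text{ small}}\delta_i \le \frac{\tau}{2m}\sum_{i\text{ small}} c_i \le \frac{\tau}{2m}\ln n$, using $\max_i(z_i-z_{i+1})\le\tau/m$ on small intervals.

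\textbf{Bounding $\E[\delta^+]$.} For large intervals I cannot use the deterministic bound, so I compute the expectation over the randomness of the schedule. By Algorithm~\ref{alg:PPP} with the Erlang$(d,k)$ gaps (and the uniform random offset for the first retained point, which is precisely the technical assumption made at the start of Section~\ref{sec:main} to make the retained points a stationary renewal-type process), the sequence of gaps $g_i = z_i - z_{i+1}$ is, up to boundary effects, an i.i.d.\ sequence of Gamma$(d,k)$ random variables, i.e.\ $g_i$ has density $\frac{k^d t^{d-1}e^{-kt}}{(d-1)!}$, mean $d/k = 1/m$. For each interval, $\delta_i\le \frac12 g_i\cdot c_i$ as above with $\sum_i c_i\le\ln n$, and the event ``$i$ large'' is $\{g_i > \tau/m\}$. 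So $\E[\delta^+] \le \frac12\E\big[\sum_i c_i\, g_i\,\mathbf{1}[g_i>\tau/m]\big]$. Since the $c_i$ are determined by the location (not the gap lengths in a way that correlates badly) -- more carefully, I would condition on which $\beta$-values are realized, or better, bound $\E[g_i\mathbf 1[g_i>\tau/m]]$ for a single Gamma-distributed gap and factor it out: $\E[\delta^+]\le \frac12 (\ln n)\cdot \frac{\E[g\,\mathbf 1[g>\tau/m]]}{\E[g]}\cdot\E[g]\cdot$ -- the clean statement is $\E[\delta^+]\le \frac{\ln n}{2}\cdot \E[g\mid g > \tau/m]\cdot\P(\dots)$... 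Let me just compute $\E[g\cdot\mathbf 1[g>\tau/m]]$ for $g\sim\mathrm{Gamma}(d,k)$ with $k=md$: substituting $u = kt$, this is $\int_{\tau d}^\infty \frac{t\cdot k^d t^{d-1}e^{-kt}}{(d-1)!}dt = \frac{1}{k(d-1)!}\int_{\tau d}^\infty u^d e^{-u}du = \frac{\Gamma(d+1,\tau d)}{k(d-1)!} = \frac{d\,\Gamma(d+1,\tau d)}{k\,d!}$. Hmm, the lemma has $\Gamma(d+2,\tau d)$ and $d!$ in the denominator with a factor $\frac{1}{2d\,d!}$. The discrepancy suggests the per-interval bound is not $\delta_i\le\frac12 g_i c_i$ but something with an extra factor of $g_i$ or $d$ -- plausibly $\delta_i \le \frac{c_i}{2}\cdot g_i$ but then one must also account for the fact that a large interval's $c_i$ is itself proportional to $g_i$ (longer $\beta$-gap $\Rightarrow$ more slope variation), giving $\delta_i \lesssim g_i^2$, and $\E[g^2\mathbf 1[g>\tau/m]] = \frac{\Gamma(d+2,\tau d)}{k^2(d-1)!}$; combined with $\sum(\text{number of large intervals})\cdot$(normalization) $\sim m\ln n$ one recovers $\frac{\Gamma(d+2,\tau d)}{2d\,d!}\cdot\frac{\ln n}{m}$ after plugging $k=md$. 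So the real per-interval inequality I need is $\delta_i \le \frac12 g_i \cdot \ell_i$ where $\ell_i$ is the \emph{slope variation} on interval $i$ and additionally $\ell_i \le g_i\cdot(\text{curvature normalization})$; multiplying, $\delta_i\lesssim g_i^2$, and $\sum$ of the normalization over all intervals telescopes to $\ln n$ divided by a factor involving $m$.

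\textbf{Main obstacle.} The delicate part is the second-difference estimate that converts convexity of $z(\cdot)$ plus the range $-z'\in\{0\}\cup[1,n]$ into the quantitative per-interval bound $\delta_i \le (\text{const})\cdot g_i^2 \cdot (\text{local slope-variation weight})$ with the weights summing to $\Theta(\ln n/m^{?})$, so that (a) summing over small intervals -- where $g_i\le\tau/m$ -- telescopes to exactly $\frac{\tau}{2}\cdot\frac{\ln n}{m}$, and (b) summing over large intervals and taking expectation over the Gamma-distributed gaps produces exactly $\frac{\Gamma(d+2,\tau d)}{2d\cdot d!}\cdot\frac{\ln n}{m}$. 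Getting the constants to land on the nose requires choosing the right reparametrization of $\beta$ (namely by $s(\beta) = \ln(-z'(\beta))$ or by a related quantity whose total variation over $[\bmin,\bmax]$ is at most $\ln n$), being careful with the boundary intervals $[\beta_0,\beta_1]$ and $[\beta_{\ell-1},\beta_\ell]$ where the renewal structure is broken by adding $\bmin,\bmax$ (these only \emph{decrease} the gaps, so the i.i.d.\ Gamma upper bound is still valid, which I will note), and correctly using the uniform-offset assumption to ensure the retained-point gaps are genuinely i.i.d.\ Gamma$(d,k)$ rather than size-biased. Once the per-interval inequality is pinned down, steps (a) and (b) are short computations: (a) is a telescoping sum, and (b) is the incomplete-gamma integral computed above with the substitution $u=kt$, $k=md$.
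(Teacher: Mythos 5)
You correctly identify the right change of variables ($s(\beta)=\ln(-z'(\beta))$, whose total variation over the schedule is at most $\ln n$) and the right shape of the per-interval bound, and your accounting for $\delta^-$ (each small interval has $z_i-z_{i+1}\le\tau/m$, and the $s$-increments sum to at most $\ln n$) matches the paper. But two things are missing. First, the per-interval inequality $2\delta_i\le (z_i-z_{i+1})(s_i-s_{i+1})$ is never actually established; it is not a generic consequence of convexity plus a slope range. The paper derives it by bounding $-z'(\beta_i)\ge\frac{z_i-\bar z}{\bar\beta-\beta_i}$ and $-z'(\beta_{i+1})\le\frac{\bar z-z_{i+1}}{\beta_{i+1}-\bar\beta}$, taking the ratio to get $\frac{-z'(\beta_i)}{-z'(\beta_{i+1})}\ge\frac{1+\lambda}{1-\lambda}\ge e^{2\lambda}$ with $\lambda=\delta_i/(z_i-z_{i+1})$, and then taking logarithms. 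Your sketches of this step stay in $\beta$-coordinates (where the slope variation sums to $n$, not $\ln n$) and you only gesture at the reparametrization without proving the inequality there.

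Second, and more seriously, your treatment of $\E[\delta^+]$ goes wrong at exactly the point where you notice the mismatch between $\Gamma(d+1,\tau d)$ and $\Gamma(d+2,\tau d)$. Your proposed repair --- that $s_i-s_{i+1}$ is proportional to $g_i=z_i-z_{i+1}$, so $\delta_i\lesssim g_i^2$ --- rests on a false premise: the $s$-length and $z$-length of an interval are not comparable in general (the map $z\mapsto s$ can be arbitrarily nonlinear), and the only global handle is $\sum_i(s_i-s_{i+1})\le\ln n$. To exploit that, one must interchange the sum over intervals with the integral over $s$: writing $\sum_{\text{large }i}(z_i-z_{i+1})(s_i-s_{i+1})=\int_S\eta_s^+\,ds$, where $\eta_s^+$ is the $z$-length of the (large) interval containing $z(s)$, one needs $\E[\eta_s^+]$ for a \emph{fixed} point $s$. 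This is where the inspection paradox enters: with the uniform random offset the retained points form a stationary process, and the interval covering a fixed level is a sum of $d+1$ (not $d$) i.i.d.\ exponentials of rate $k$, i.e.\ Gamma$(d+1,k)$ with density $k^{d+1}t^de^{-kt}/d!$. Then $\E[\theta\,\mathbf 1[\theta>\tau/m]]=\Gamma(d+2,\tau d)/(k\,d!)$, and multiplying by $|S|\le\ln n$ and dividing by $2$ gives the stated bound. Your remark that the uniform offset ensures the gaps are ``genuinely i.i.d.\ Gamma$(d,k)$ rather than size-biased'' is backwards for the quantity that matters here: the offset is what makes the covering-interval law at every fixed point equal to the size-biased (Gamma$(d+1,k)$) one, uniformly over the point, which is precisely the source of the $\Gamma(d+2,\tau d)/d!$ in the lemma.
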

Using Markov's inequality, we can now conclude that for any $\tau^+>0$ we have
$$
\P(\delta^+ \ge \mbox{$\frac{\tau^+}2$}\cdot\mbox{$\frac{\ln n}m$})
\;\;\le\;\;\frac{1}{\frac{\tau^+}2\cdot\mbox{$\frac{\ln n}m$}}\cdot \E[\delta^+]
\;\;\le\;\;  \frac{\Gamma(d+2,\tau d)}{\tau^+\;\cdot\; d\;\cdot \;d!}
$$
Thus, with probability at least $1- \frac{\Gamma(d+2,\tau d)}{\tau^+\;\cdot\; d\;\cdot \;d!}$ Algorithm~\ref{alg:sort}
produces a schedule satisfying $\delta \le \frac{\tau+\tau^+}2\cdot\frac{\ln n}m$.

Recall that we want Algorithm~\ref{alg:sort} to succeed with probability at least $\rho=\frac{0.75}{1-\gamma}$
to make the overall probability of success at least $0.75$. (Here $\gamma$ is the constant from Lemma~\ref{lemma:PairedProduct}).
Let us define function $\tau_\rho(d)$ as follows:
$$
\tau_\rho(d) 
= \min_{\tau\ge 0,\tau^+>0} \left\{\tau+\tau^+\:|\: \mbox{$\frac{\Gamma(d+2,\tau d)}{\tau^+\;\cdot\; d\;\cdot \;d!}$} \le 1-\rho\right\}
= \min_{\tau\ge 0} \left[\tau+ \mbox{$\frac{\Gamma(d+2,\tau d)}{(1-\rho)\;\cdot\; d\;\cdot \;d!}$} \right]
$$

\pagebreak

The table below shows some values of this function for $\gamma=0.24$ and $\rho=\frac{0.75}{1-\gamma}=\frac{75}{76}$ (computed with the code of~\cite{GammaIntegral}).

\begin{table}[!h]
  \centering
  \label{tab:table1}
  \begin{tabular}{c||c|c|c|c|c|c|c|c|c|c}
    $d$                           & 1 & 2 & 4 & 8 & 16 & 32 & 64 & 128 & 256 & 512 \\
    \hline
    upper bound on $\tau_\rho(d)$ & 9.903 & 6.052 & 4.000 & 2.860 & 2.197 & 1.794 & 1.539 & 1.372 & 1.260 & 1.184 \\
    achieved with $\tau=$         & 8.645 & 5.384 & 3.634 & 2.653 & 2.075 & 1.720 & 1.492 & 1.342 & 1.241 & 1.170
  \end{tabular}
\end{table}


We can now formulate our main result for case I.
\begin{theorem} \label{th:main}
Let $\hat Q$ be the estimate given by Algorithm~\ref{alg:wrapper} (with parameter $r$)
applied to the schedule produced by Algorithm~\ref{alg:sort} (with parameters $k=md$ and $d$).
Suppose that
\begin{equation}
\hspace{60pt}m \ge \frac{\tau_\rho(d)\cdot \ln n}{2 \ln\left(\mbox{$1+\frac{1}{2}$}\gamma  r\tilde\varepsilon^{\,2}\right)}
\qquad\quad\mbox{for some }\gamma\in(0,0.25)\mbox{ and $\rho=\frac{0.75}{1-\gamma}$}
\label{eq:m:condition}
\end{equation}
where $\tilde\varepsilon=1-(1+\varepsilon)^{-1/2}=\frac{1}{2}\varepsilon + O(\varepsilon^2)$. 
Then $\hat Q\in(\frac{Q}{1+\varepsilon},Q(1+\varepsilon))$ with probability at least $0.75$.
The expected number of oracle calls that this algorithm makes is $mq(r+d)+2r+1$. 

In particular, \eqref{eq:m:condition} will be satisfied for
$d=64$,
$m\ge 3.6 \cdot \ln n$
and
$
r=\left\lceil 2\tilde\varepsilon^{\,-2}\right\rceil=8(1+o(1))\varepsilon^{-2}
$.
\end{theorem}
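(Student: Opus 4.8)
The plan is to combine the three ingredients already assembled: Lemma~\ref{lemma:PairedProduct} (which converts a bound on $\Vrel(W)=\Vrel(V)$ into the $(1+\varepsilon)$-approximation guarantee), the definition of $\tau_\rho(d)$ together with Markov's inequality on $\delta^+$ (which says that with probability at least $\rho$ the schedule satisfies $\delta\le\frac{\tau_\rho(d)}{2}\cdot\frac{\ln n}{m}$), and Lemma~\ref{lemma:delta-bound} feeding the latter. Concretely: condition~\eqref{eq:m:condition} is exactly the rearrangement of $\frac{\tau_\rho(d)\ln n}{2m}\le\ln(1+\frac12\gamma r\tilde\varepsilon^{\,2})$, i.e. $e^{\delta}\le 1+\frac12\gamma r\tilde\varepsilon^{\,2}$ whenever $\delta\le\frac{\tau_\rho(d)\ln n}{2m}$. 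So the first step is to observe that with probability at least $\rho$ over the randomness of Algorithm~\ref{alg:sort} the produced schedule has $\Vrel(W)=e^\delta\le 1+\frac12\gamma r\tilde\varepsilon^{\,2}$, i.e. it is good in the sense of~\eqref{eq:Scondition}. Conditioned on this event, Lemma~\ref{lemma:PairedProduct} gives $\hat Q/Q\in(\frac1{1+\varepsilon},1+\varepsilon)$ with probability at least $1-\gamma$. Multiplying, the overall success probability is at least $\rho(1-\gamma)=0.75$ by the choice $\rho=\frac{0.75}{1-\gamma}$.

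The second step is the oracle-call count. Algorithm~\ref{alg:sort} runs ${\tt TPA}(k)$, which is $k$ independent runs of Algorithm~\ref{alg:TPA}; each run of Algorithm~\ref{alg:TPA} makes one oracle call per point produced plus one final call for the point that lands outside $[\bmin,\bmax]$. By the PPP description (Algorithm~\ref{alg:PPP}) the number of points in a single ${\tt TPA}(1)$ run is Poisson with mean $q$, so a single run costs $q+1$ oracle calls in expectation and ${\tt TPA}(k)$ costs $k(q+1)=mdq+md$ in expectation. Then Algorithm~\ref{alg:wrapper} draws $r$ independent samples of $(W,V)$; each such sample needs one oracle call at every $\beta_i$, $i\in[0,\ell]$, i.e. $\ell+1$ calls, where $\ell$ is the number of intervals. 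Since line~2 of Algorithm~\ref{alg:sort} keeps every $d$th point of $\calB$ (with a uniformly random starting offset in $\{1,\dots,d\}$) and then adds $\bmin,\bmax$, we have $\ell-1=\lfloor(|\calB|-\text{offset})/d\rfloor+\ldots$; taking expectations over the offset and over $|\calB|\sim\mathrm{Poisson}(kq)$ gives $\E[\ell-1]=\E[|\calB|]/d=kq/d=mq$, hence $\E[\ell+1]=mq+2$. Thus the sampling phase costs $r(mq+2)$ in expectation, and the grand total is $mdq+md+r(mq+2)=mq(r+d)+2r+md$. (The stated $mq(r+d)+2r+1$ uses that the $md$ TPA terminations and the $+md$ here should be accounted carefully — I would double-check whether the intended count folds the offset bookkeeping so that the additive term is $1$ rather than $md$; the dominant $mq(r+d)$ term is in any case immediate, and that is what matters for the $O(q\ln n\cdot\varepsilon^{-2})$ headline.)

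The third step is the concrete instantiation. Pick $\gamma=0.24$, so $\rho=75/76$ and the table gives $\tau_\rho(64)\le 1.539$. With $r=\lceil 2\tilde\varepsilon^{-2}\rceil$ we have $\frac12\gamma r\tilde\varepsilon^{2}=0.12\,r\tilde\varepsilon^2\ge 0.24$, so $\ln(1+\frac12\gamma r\tilde\varepsilon^2)\ge\ln(1.24)>0.215$, whence the right-hand side of~\eqref{eq:m:condition} is at most $\frac{1.539\ln n}{2\cdot 0.215}<3.6\ln n$; thus $m\ge 3.6\ln n$ suffices. For the asymptotics, $\tilde\varepsilon=\frac12\varepsilon+O(\varepsilon^2)$ gives $\tilde\varepsilon^{-2}=4\varepsilon^{-2}(1+o(1))$ so $r=8(1+o(1))\varepsilon^{-2}$, and the oracle-call count is $mq(r+d)+O(r)=\Theta(\ln n)\cdot q\cdot\Theta(\varepsilon^{-2})=O(q\ln n\,\varepsilon^{-2})$.

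\emph{Main obstacle.} None of the three steps is deep given the lemmas; the only genuinely delicate point is the expected-oracle-call bookkeeping — in particular justifying $\E[\ell]=mq+O(1)$ cleanly (this is where the uniformly-random starting offset in line~2 of Algorithm~\ref{alg:sort} is used: it makes the number of retained points a clean $\frac1d$-fraction of $|\calB|$ in expectation, avoiding a floor-function correction), and reconciling the TPA-termination calls with the claimed additive constant. Everything else is substitution into~\eqref{eq:m:condition} and the numeric table.
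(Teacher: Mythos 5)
Your argument is essentially the paper's own proof: the success probability comes, exactly as in the paper, from combining the Markov-inequality consequence of Lemma~\ref{lemma:delta-bound} (which by the definition of $\tau_\rho(d)$ gives $\P\bigl(\delta\le\frac{\tau_\rho(d)}{2}\cdot\frac{\ln n}{m}\bigr)\ge\rho$) with Lemma~\ref{lemma:PairedProduct} via the rearrangement of~\eqref{eq:m:condition}, then multiplying $\rho(1-\gamma)=0.75$, and your numeric instantiation ($\gamma=0.24$, $\tau_\rho(64)\le 1.539$, $\ln 1.24>0.215$, $r=\lceil 2\tilde\varepsilon^{\,-2}\rceil$) is the intended one. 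On the oracle count your bookkeeping is if anything more careful than the paper's, which simply asserts that Algorithm~\ref{alg:sort} makes $mdq+1$ calls on average: since ${\tt TPA}(md)$ is by definition $md$ independent runs of ${\tt TPA}(1)$, each ending with one overshoot call, the additive term should arguably be $md$ rather than $1$ as you compute, a lower-order discrepancy that affects neither the dominant $mq(r+d)$ term nor the theorem's asymptotic claim, while your derivation $\E[\ell]=mq+1$ via the uniformly random offset and the resulting $(mq+2)r$ calls for Algorithm~\ref{alg:wrapper} coincide with the paper.
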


\begin{proof}
As we just showed,
\begin{equation}
\mbox{$\P\left(\delta\le \frac{\tau_\rho(d)}2\cdot\frac{\ln n}{m}\right)$}\;\;\ge\;\;\rho\label{eq:GADIGKAHSKGJA}
\end{equation}
Condition $\delta\le \frac{\tau_\rho(d)}2\cdot\frac{\ln n}{m}$ implies condition $\delta\le \ln\left(\mbox{$1+\frac{1}{2}$}\gamma  r\tilde\varepsilon^{\,2}\right)$  (by~\eqref{eq:m:condition}),
which is in turn equivalent to $\Vrel(W) \le 1+\frac{1}{2}\gamma r\tilde\varepsilon^{\,2}$.
Thus, from Lemma~\ref{lemma:PairedProduct} we get
\begin{equation}
\mbox{$\P\left(\hat Q\in(\frac{Q}{1+\varepsilon},Q(1+\varepsilon))\;|\;\delta\le \frac{\tau_\rho(d)}2\cdot\frac{\ln n}{m}\right)\ge 1-\gamma$} \label{eq:GAKGJHASGG}
\end{equation}
Multiplying~\eqref{eq:GADIGKAHSKGJA} and~\eqref{eq:GAKGJHASGG} gives the first claim.

A PPP of rate $k$ on an interval $[z(\bmax),z(\bmin)]$ produces $k[z(\bmin)-z(\bmax)]=mdq$ points on average.
Thus, Algorithm~\ref{alg:sort} makes $mdq+1$ oracle calls on average and produces a sequence $(\beta_0,\ldots,\beta_\ell)$ 
with $\E[\ell]=mq+1$. Algorithm~\ref{alg:wrapper} then makes $(\ell+1) r$ oracle calls, i.e.\ $(mq+2)r$ calls on average.
This gives the second claim. 
\end{proof}

\myparagraph{\fbox{Case II: $H(x)\in \{0\}\cup[1,n]$ for all $x\in\Omega$}} 
We now consider the general case. In Section~\ref{sec:lemma:delta-bound:proof2} we prove the following fact.
\begin{lemma}\label{lemma:delta-bound2}
For any constant $\lambda\in(0,1)$ there exists a decomposition $\delta=\delta^-+\delta^+$ with $\delta^-,\delta^+\ge 0$
such that 
$$
\delta^-\le\ln\frac{1}{1-\lambda}+\frac{\tau}2\cdot\frac{2+\ln \frac n\lambda}m 
\qquad\qquad\quad
\E[\delta^+]\le  \frac{\Gamma(d+2,\tau d)}{2d\;\cdot \;d!}\cdot\frac{2+\ln \frac n\lambda}m
$$
\end{lemma}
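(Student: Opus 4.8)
The idea is to reduce Case II to Case I by isolating the effect of the value $0$ of the Hamiltonian, which is exactly the thing that made the incomplete-gamma bound in Lemma~\ref{lemma:delta-bound} fail: when $H(x)=0$ is possible we have $Z(+\infty)>0$, so $z(\cdot)$ does not go to $-\infty$ and the PPP description in Algorithm~\ref{alg:PPP} gets truncated near the right endpoint, with a clump of TPA points piling up at $z(\bmax)$ (recall the ``$\max\{U',Z(+\infty)\}$'' phenomenon in Lemma~\ref{lemma:TPAstep}). The plan is therefore: split the $\beta$-axis at a threshold $\beta^\star$ chosen so that the tail mass $Z(+\infty)/Z(\beta^\star)$ is controlled by $\lambda$, i.e. take $\beta^\star$ with $Z(\beta^\star)=Z(+\infty)/(1-\lambda)$ (equivalently $z(\beta^\star)=z(+\infty)+\ln\frac{1}{1-\lambda}$). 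Write $\delta=\sum_i\delta_i$ and classify each interval $i$ of the schedule as ``right'' (contained in, or reaching past, $\beta^\star$) or ``left'' (lying below $\beta^\star$).

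For the right part, I would bound the total contribution crudely: $\delta_i\le z_i-z_{i+1}$ always (since $z$ is convex and decreasing, $2z(\bar\beta)\ge z(\beta_i)+z(\beta_{i+1})-\,$no — more simply $\delta_i = z_i - 2z(\bar\beta_{i,i+1}) + z_{i+1} \le z_i - z_{i+1}$ because $z(\bar\beta_{i,i+1})\ge z_{i+1}$), and the right intervals together span a $z$-range of at most $z(\beta^\star)-z(\bmax)\le z(\beta^\star)-z(+\infty)=\ln\frac{1}{1-\lambda}$. This gives the additive $\ln\frac{1}{1-\lambda}$ term and contributes it entirely to $\delta^-$. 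For the left part, every interval sits in the region where $H$ is effectively bounded in $[1,n']$ with $n'=n/\lambda$: indeed, on $[\bmin,\beta^\star]$ one can replace the Hamiltonian's effective range — the standard computation bounding $\delta_i$ in terms of $z''$ and of $\ln(\text{range of }H)$ goes through with $\ln n$ replaced by $2+\ln\frac n\lambda$, because restricting to $\beta\le\beta^\star$ makes the conditional distribution $\mu_\beta$ put at most a $\lambda$-fraction of weight on the $H(x)=0$ states, so the relevant second-moment/variance estimate of $H$ under $\mu_\beta$ is $O(n^2/\lambda)$ rather than $\infty$. Then I would apply verbatim the small/large interval split from Lemma~\ref{lemma:delta-bound} to the left intervals: small left intervals give $\delta^-_{\text{left}}\le\frac\tau2\cdot\frac{2+\ln(n/\lambda)}{m}$ and large left intervals give $\E[\delta^+]\le\frac{\Gamma(d+2,\tau d)}{2d\cdot d!}\cdot\frac{2+\ln(n/\lambda)}{m}$, the extra $+2$ absorbing the two logarithmic correction terms. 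Setting $\delta^-=\ln\frac{1}{1-\lambda}+\delta^-_{\text{left}}$ and $\delta^+=\delta^+_{\text{left}}$ yields the claimed bounds, and non-negativity of both pieces is clear since each $\delta_i\ge0$.

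The main obstacle I anticipate is the left-interval second-moment estimate: I need to show that for $\beta\le\beta^\star$ the quantity governing $\delta_i$ — essentially $\max_{\beta\in[\beta_i,\beta_{i+1}]}(-z''(\beta))\cdot(z_i-z_{i+1})$ or, more precisely, the integral form $\delta_i=\int\!\!\int z''$ over a triangle — is bounded by something like $(z_i-z_{i+1})\cdot\big(2+\ln\frac n\lambda\big)/1$ in the right units, mirroring how Huber gets $\ln n$ in Case I from $z''(\beta)=\mathrm{Var}_{\mu_\beta}(H)\le (\max H)^2$ together with $\int(-z')=\ln(\cdot)$ telescoping. The subtlety is that $z''(\beta)=\mathrm{Var}_{\mu_\beta}(-H)$ can be as large as $\Theta(n^2)$ uniformly, so the clean bound must instead come from integrating $-z'(\beta)=\E_{\mu_\beta}[H]$: on $[\bmin,\beta^\star]$ the ``mass on $H\ge 1$'' stays above $\lambda$, hence $\E_{\mu_\beta}[H]$ relates to the derivative of $\ln(\text{restricted partition function})$ and the total variation budget over the left region is $z(\bmin)-z(\beta^\star)\le \ln\frac{Z(\bmin)}{Z(+\infty)/(1-\lambda)}$; the $+2$ and the $\lambda$ inside the log are exactly the slack needed to convert between $Z(\bmax)$, $Z(\beta^\star)$ and $Z(+\infty)$ cleanly. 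I would carry out that conversion carefully, then the rest is a routine re-run of the Lemma~\ref{lemma:delta-bound} argument with $n\rightsquigarrow n/\lambda$ and an extra additive $2$.
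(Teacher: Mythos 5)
Your overall strategy is the same as the paper's: split the schedule at a $\lambda$-dependent threshold, bound the contribution of the ``flat'' side deterministically by $\ln\frac{1}{1-\lambda}$, and rerun the Case-I area argument on the ``steep'' side with $|S|\le\ln\frac{n}{\lambda}$. (The paper thresholds at $\hat\beta$ with $z'(\hat\beta)=-\lambda$ rather than at your $\beta^\star$ with $Z(\beta^\star)=Z(+\infty)/(1-\lambda)$, but either choice delivers both needed facts: $z(\mbox{threshold})-z(\bmax)\le\ln\frac{1}{1-\lambda}$ on the right, and $-z'(\beta)\ge\lambda$, hence $s(\beta)\ge\ln\lambda$, on the left.) Your anticipated obstacle about a ``second-moment estimate'' is a non-issue: the proof of Lemma~\ref{lemma:delta-bound} never touches $z''$; it only needs $-z'(\beta)=\E_{X\sim\mu_\beta}[H(X)]\ge\lambda$ on the steep side, which is immediate from the threshold definition, exactly as you note at the end.

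The genuine gap is the crossing interval, the unique $i$ with $\beta_i<\beta^\star<\beta_{i+1}$. You assign it to the ``right'' part and claim the right intervals together span a $z$-range of at most $z(\beta^\star)-z(\bmax)\le\ln\frac{1}{1-\lambda}$; that is false, because the crossing interval begins at $z_i>z(\beta^\star)$ and $z_i-z(\beta^\star)$ is not controlled by $\lambda$ (it can be of order $1/m$ or much larger). Nor can you fold this interval into the left part, since $s_{i+1}<s(\beta^\star)$ may lie far below $\ln\lambda$, so its rectangle $\Delta_i$ escapes the strip of width $\ln\frac{n}{\lambda}$. The paper treats this one interval separately: if it is small, $\delta_i\le z_i-z_{i+1}\le\tau/m=\frac{\tau}{2}\cdot\frac{2}{m}$ is charged to $\delta^-$; if it is large, $\delta_i$ is charged to $\delta^+$ and bounded \emph{in expectation} by $\E[\theta^+]\le\frac{\Gamma(d+2,\tau d)}{md\cdot d!}=\frac{\Gamma(d+2,\tau d)}{2d\cdot d!}\cdot\frac{2}{m}$ --- a deterministic bound is impossible in that case, so it cannot go into $\delta^-$ as your plan requires. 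This single interval is the entire source of the ``$+2$'' in $2+\ln\frac{n}{\lambda}$; your explanation of the $+2$ as absorbing logarithmic correction terms in the left-part analysis points at the wrong place, since with your threshold the left part needs no correction at all.
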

As in the first case, we conclude from the Markov's inequality  that with probability at least 
$1- \frac{\Gamma(d+2,\tau d)}{\tau^+\;\cdot\; d\;\cdot \;d!}$ Algorithm~\ref{alg:sort}
produces a schedule satisfying $\delta \le \ln\frac{1}{1-\lambda}+\frac{\tau+\tau^+}2\cdot\frac{2+\ln \frac{n}\lambda}m$.
This leads to
\begin{theorem} \label{th:main2}
The conclusion of Theorem~\ref{th:main} holds if 
\begin{equation}
\hspace{20pt}m \ge \frac{\tau_\rho(d)\cdot (2+\ln \frac{n}\lambda)}{2 \ln\left[\left(\mbox{$1+\frac{1}{2}$}\gamma  r\tilde\varepsilon^{\,2}\right)(1-\lambda)\right]}
\qquad\quad\mbox{for some }\gamma\in(0,0.25)\mbox{, $\rho=\frac{0.75}{1-\gamma}$ and $\lambda\in(0,1)$}
\label{eq:m:condition2}
\end{equation}
For example, \eqref{eq:m:condition2} will be satisfied for
$d=64$,
$m\ge 3.6 \cdot (9+\ln n)$
and
$
r=\left\lceil 2\tilde\varepsilon^{\,-2}\right\rceil=8(1+o(1))\varepsilon^{-2}
$ (where we used $\gamma=0.24$ and $\lambda=e^{-7}$).
\end{theorem}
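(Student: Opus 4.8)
The plan is to mirror the proof of Theorem~\ref{th:main} line for line, using Lemma~\ref{lemma:delta-bound2} in place of Lemma~\ref{lemma:delta-bound} and carrying the extra additive term $\ln\frac1{1-\lambda}$ through the argument. Fix $\gamma\in(0,0.25)$, $\rho=\frac{0.75}{1-\gamma}$, $\lambda\in(0,1)$, and a constant $\tau>0$ to be chosen, and let $\delta=\delta^-+\delta^+$ be a decomposition as in Lemma~\ref{lemma:delta-bound2}. Applying Markov's inequality to $\delta^+$ exactly as in Case~I, for every $\tau^+>0$ one gets $\P(\delta^+\ge \frac{\tau^+}2\cdot\frac{2+\ln\frac n\lambda}{m})\le \frac{\Gamma(d+2,\tau d)}{\tau^+\cdot d\cdot d!}$, hence with probability at least $1-\frac{\Gamma(d+2,\tau d)}{\tau^+\cdot d\cdot d!}$ the schedule of Algorithm~\ref{alg:sort} satisfies $\delta\le \ln\frac1{1-\lambda}+\frac{\tau+\tau^+}{2}\cdot\frac{2+\ln\frac n\lambda}{m}$. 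Minimizing $\tau+\tau^+$ over the feasible set $\{\frac{\Gamma(d+2,\tau d)}{\tau^+\cdot d\cdot d!}\le 1-\rho\}$ reproduces exactly the quantity $\tau_\rho(d)$ defined before Theorem~\ref{th:main} (for fixed $\tau$ the smallest admissible $\tau^+$ is $\frac{\Gamma(d+2,\tau d)}{(1-\rho)\cdot d\cdot d!}$, then one optimizes over $\tau$), so with probability at least $\rho$ the schedule satisfies $\delta\le \ln\frac1{1-\lambda}+\frac{\tau_\rho(d)}{2}\cdot\frac{2+\ln\frac n\lambda}{m}$.

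Next I would observe that \eqref{eq:m:condition2} is precisely the inequality $\frac{\tau_\rho(d)}{2}\cdot\frac{2+\ln\frac n\lambda}{m}\le \ln\bigl[(1+\frac12\gamma r\tilde\varepsilon^{\,2})(1-\lambda)\bigr]=\ln(1+\frac12\gamma r\tilde\varepsilon^{\,2})-\ln\frac1{1-\lambda}$, so that on the above event $\delta\le \ln(1+\frac12\gamma r\tilde\varepsilon^{\,2})$, equivalently $\Vrel(W)=\Vrel(V)\le 1+\frac12\gamma r\tilde\varepsilon^{\,2}$, i.e. condition~\eqref{eq:Scondition} of Lemma~\ref{lemma:PairedProduct} holds. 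By that lemma, conditioned on this event $\hat Q\in(\frac Q{1+\varepsilon},Q(1+\varepsilon))$ with probability at least $1-\gamma$; multiplying by the probability $\rho$ of the event gives overall probability at least $\rho(1-\gamma)=0.75$. The expected number of oracle calls is identical to Case~I, since neither Algorithm~\ref{alg:sort} nor Algorithm~\ref{alg:wrapper} was modified: Algorithm~\ref{alg:sort} makes $mdq+1$ calls on average and returns a schedule with $\E[\ell]=mq+1$, after which Algorithm~\ref{alg:wrapper} makes $(\ell+1)r$ calls, i.e. $(mq+2)r$ on average, for the total $mdq+1+(mq+2)r=mq(r+d)+2r+1$.

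For the numerical example I would plug in $d=64$, $\gamma=0.24$, $\rho=\frac{75}{76}$, $\lambda=e^{-7}$, and the table bound $\tau_\rho(64)\le 1.539$. With $r=\lceil 2\tilde\varepsilon^{\,-2}\rceil$ we have $r\tilde\varepsilon^{\,2}\ge 2$, so $1+\frac12\gamma r\tilde\varepsilon^{\,2}\ge 1.24$; also $2+\ln\frac n\lambda=9+\ln n$ and $1-\lambda=1-e^{-7}$. Hence \eqref{eq:m:condition2} is implied by $m\ge \frac{1.539\,(9+\ln n)}{2\ln\!\bigl(1.24\,(1-e^{-7})\bigr)}$, and a direct evaluation of the denominator ($\approx 0.428$) shows the coefficient is below $3.6$, so $m\ge 3.6(9+\ln n)$ suffices.

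Nothing in this argument is delicate once Lemma~\ref{lemma:delta-bound2} is in hand; the only genuinely new step over Case~I is the algebraic identity that lets the factor $(1-\lambda)$ in the denominator of~\eqref{eq:m:condition2} absorb the additive $\ln\frac1{1-\lambda}$ coming from the bound on $\delta^-$. The real obstacle is therefore relegated to the proof of Lemma~\ref{lemma:delta-bound2} itself — in particular, controlling the $\delta_i$ contribution of the single interval whose image under $z(\cdot)$ straddles (or sits close to) $z(+\infty)$, which is what forces both the extra constant $\ln\frac1{1-\lambda}$ and the replacement of $\ln n$ by $2+\ln\frac n\lambda$.
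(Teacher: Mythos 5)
Your proposal is correct and follows essentially the same route the paper takes: the paper derives Theorem~\ref{th:main2} exactly by combining Lemma~\ref{lemma:delta-bound2} with the Markov-inequality argument from Case~I and letting the factor $(1-\lambda)$ inside the logarithm in~\eqref{eq:m:condition2} absorb the additive $\ln\frac{1}{1-\lambda}$, with the oracle-call count unchanged. Your numerical verification for $d=64$, $\gamma=0.24$, $\lambda=e^{-7}$ (coefficient $\approx 1.539/0.428<3.6$) also matches the paper's stated example.
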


\subsection{Approximate sampling oracles}\label{sec:approx}
So far we assumed that exact sampling oracles $\mu_\beta$ are used.
For many applications, however, we only have approximate sampling oracles $\tilde\mu_\beta$ 
that are sufficiently close to $\mu_\beta$ in terms of the variation distance $||\cdot||_{TV}$ defined via
$$
||\tilde\mu_\beta-\mu_\beta||_{TV}=\max_{A\subseteq \Omega} |\tilde\mu_\beta(A)-\mu_\beta(A)|=\frac{1}{2} \sum_{x\in\Omega}|\tilde\mu_\beta(x)-\mu_\beta(x)|.
$$
The analysis can be extended to approximate oracles using a standard trick (see e.g.\ \cite[Remark 5.9]{Stefankovic:JACM09}).

\begin{theorem}
Let $\hat Q$ be the output of the algorithm with parameters $d,m,r$ satisfying
the conditions of Theorem~\ref{th:main} or~\ref{th:main2} (depending on whether $H(\cdot)\in[1,n]$ or $H(\cdot)\in\{0\}\cup[n]$),
where exact sampling oracles~$\mu_\beta$ are replaced with
approximate sampling oracles~$\tilde\mu_\beta$ satisfying
 $ ||\mu_\beta-\tilde\mu_\beta||_{TV} \le \frac{\kappa}{mq(r+d)+3r+1}$.
Then $\hat Q\in(\frac{Q}{1+\varepsilon},Q(1+\varepsilon))$ 
with probability at least $0.75-\kappa$. 
\end{theorem}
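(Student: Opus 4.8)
The plan is to use the standard coupling argument for replacing exact oracles by approximate ones. First I would observe that the entire algorithm (running Algorithm~\ref{alg:sort} to produce a schedule and then Algorithm~\ref{alg:wrapper} to produce $\hat Q$) is a randomized procedure that makes some number $N$ of calls to sampling oracles, where $N$ is itself a random variable. By the oracle-call counts established in Theorems~\ref{th:main} and~\ref{th:main2}, we have $\E[N] = mq(r+d)+2r+1$; accounting for the $|{\tt TPA}|$ samples inside Algorithm~\ref{alg:TPA} more carefully (the schedule generation makes $mdq+1$ calls, Algorithm~\ref{alg:wrapper} makes $(\ell+1)r$ calls with $\E[\ell]=mq+1$) gives $\E[N] \le mq(r+d)+3r+1$ as an upper bound, which is the quantity appearing in the denominator of the hypothesis.

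Next I would set up the coupling. Fix any realization of the external randomness (the uniform variables $U$ in {\tt TPAstep}, the index offset in line 2 of Algorithm~\ref{alg:sort}, and the sample-averaging). Consider the sequence of oracle calls made during one run. Using the standard fact that two distributions at variation distance $\eta$ can be coupled to agree with probability $\ge 1-\eta$, we couple each approximate sample $\tilde X\sim\tilde\mu_\beta$ with an exact sample $X\sim\mu_\beta$ so that $\P(\tilde X\ne X)\le \eta := \frac{\kappa}{mq(r+d)+3r+1}$ for that call. Let $\mathcal{E}$ be the event that all oracle calls in the run produce identical samples under the two coupled executions. Conditioning on the number of calls $N$ and applying a union bound, $\P(\neg\mathcal{E}\mid N=j)\le j\eta$, so $\P(\neg\mathcal{E})\le \E[N]\cdot\eta \le \kappa$ by the choice of $\eta$ and the bound on $\E[N]$.

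Finally I would combine. On the event $\mathcal{E}$, the approximate-oracle execution produces exactly the same output $\hat Q$ as the exact-oracle execution, which by Theorem~\ref{th:main} (resp.~\ref{th:main2}) lies in $(\frac{Q}{1+\varepsilon},Q(1+\varepsilon))$ with probability at least $0.75$. Hence
$$
\P\!\left(\hat Q\in\left(\mbox{$\frac{Q}{1+\varepsilon}$},Q(1+\varepsilon)\right)\right)
\;\ge\; \P\!\left(\hat Q\in\left(\mbox{$\frac{Q}{1+\varepsilon}$},Q(1+\varepsilon)\right)\cap \mathcal{E}\right)
\;\ge\; 0.75 - \P(\neg\mathcal{E})
\;\ge\; 0.75 - \kappa,
$$
as desired.

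The only genuinely delicate point — and the step I expect to be the main obstacle — is justifying that the number of oracle calls $N$ is a legitimate random variable one can union-bound over, i.e.\ that the coupling can be constructed call-by-call even though $N$ is unbounded and data-dependent (the number of {\tt TPAstep} invocations and the schedule length $\ell$ both depend on earlier samples). This is handled by building the coupling sequentially: at the $j$-th call, having already coupled calls $1,\dots,j-1$, couple $\tilde\mu_{\beta_j}$ with $\mu_{\beta_j}$ optimally given the (common, on $\mathcal{E}$) history; then $\P(\neg\mathcal{E})=\sum_{j\ge 1}\P(\text{first disagreement at call }j)\le \eta\sum_{j\ge 1}\P(N\ge j)=\eta\,\E[N]$, using that $\{N\ge j\}$ is determined by calls $1,\dots,j-1$ (so on $\mathcal{E}$ restricted to the first $j-1$ calls it is the same event in both executions). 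Everything else is routine.
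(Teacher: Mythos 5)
Your proposal is correct and follows essentially the same route as the paper: couple each approximate call with the corresponding exact call so they disagree with probability at most $\frac{\kappa}{mq(r+d)+3r+1}$, note that the pathwise number of calls is at most $N(1+\frac{r}{d})+3r+1$ with $N$ the Poisson$(mdq)$ count of TPA points (so its expectation is at most $mq(r+d)+3r+1$), and conclude on the all-calls-agree event via Theorem~\ref{th:main} or~\ref{th:main2}. The only difference is bookkeeping: the paper conditions on $N$ and evaluates $\E[(1-\delta)^{Nc+3r+1}]$ through the Poisson generating function before relaxing it to $1-\delta\cdot\E[\mbox{number of calls}]$, whereas you pass directly to the first-disagreement union bound (with the correct measurability justification that the event of making a $j$-th call, and its query value, are determined by the shared history on the agreement event), and both computations yield the same bound $1-\kappa$.
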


As mentioned in the introduction, probability $0.75-\kappa$ can be boosted to any other probability in $(0.5,1)$
by repeating the algorithm a constant number of times and taking the median (assuming that $\kappa$ is a constant in $(0,0.25)$).
Alternatively, one can tweak parameters in Theorems~\ref{th:main} and~\ref{th:main2} to get the desired probability directly.

\begin{proof}
It is known that there exists a coupling between $\mu_\beta$ and $\tilde\mu_\beta$ such that 
they produce identical samples with probability at least $1-||\tilde\mu_\beta-\mu_\beta||_{TV}\ge 1-\delta$,
where we denoted $\delta=\frac{\kappa}{mq(r+d)+3r+1}$. 
Let ${\mathbb A}$ and $\tilde{\mathbb A}$ be the algorithms that use respectively exact and approximate samples,
where the $k$-th call to $\mu_\beta$ in ${\mathbb A}$ is coupled with the $k$-th call to $\tilde\mu_{\tilde\beta}$ in $\tilde{\mathbb A}$
when $\beta=\tilde\beta$.
We say that  the $k$-th call is {\em good} if the produced samples are identical.
Note, $\P[\mbox{$k$-th call is good}\:|\:\mbox{all previous calls were good}]\ge 1-\delta$, since the conditioning 
event implies $\beta=\tilde\beta$.
Also, if all calls are good then ${\mathbb A}$ and $\tilde{\mathbb A}$ give identical results.

Let $N$ be the number of points inside $[z(\bmax),z(\bmin)]$ produced by the call ${\tt TPA}(md)$ in Algorithm~\ref{alg:sort}.
Then $N$ follows the Poisson distribution of rate $\lambda=mdq$, i.e.\ 
$\P(N=n)=\frac{\lambda^{n}e^{-\lambda}}{n !}$.
Algorithm~\ref{alg:sort} makes $N+1$ oracle calls, and produces a sequence $(\beta_0,\ldots,\beta_\ell)$ 
with $\ell\le \frac{N}{d}+2$.
Thus, the total number of oracle calls is $N+1+(\ell+1)r\le Nc+3r+1$ where $c=1+\frac{r}{d}$.
Denoting $\mu=\lambda(1-\delta)^c$, we can write
\begin{eqnarray*}
\P[\mbox{all calls are good}]
& \!\ge\! & \sum_{n=0}^{\infty} \P(N=n)\cdot (1\!-\!\delta)^{nc+3r+1}
= \sum_{n=0}^{\infty}  \frac{\lambda^{n}e^{-\lambda}}{n !} \cdot (1\!-\!\delta)^{n c+3r+1}
\\
& \!=\! & \sum_{n=0}^{\infty}  \frac{\mu^{n}e^{-\mu}}{n !} \cdot e^{\mu - \lambda}(1\!-\!\delta)^{3r+1}
=e^{\mu-\lambda}(1\!-\!\delta)^{3r+1}
=e^{-\lambda\left(1-(1-\delta)^c\right)}(1\!-\!\delta)^{3r+1} \\
& \!\ge\! & e^{-\lambda\left(1-(1-c\delta)\right)}(1\!-\!\delta)^{3r+1}
\;\ge\; (1\!-\!\lambda c \delta) (1\!-\!\delta)^{3r+1}
\;\ge\; 1 \!-\! \lambda c \delta \!-\! (3r\!+\!1)\delta
\;\ge\; 1 \!-\! \kappa
\end{eqnarray*}
where we used the facts that $(1-x)^c\ge 1-cx$ and $e^{-x}\ge 1-x$ for $x\ge 0$ and $c\ge 1$.
Using the union bound, we obtain the claim of the theorem.
\end{proof}

\section{Proofs}
\subsection{Proof of Lemma~\ref{lemma:delta-bound}}\label{sec:lemma:delta-bound:proof}

We will assume that the sequence $(\beta_0,\ldots,\beta_\ell)$ is strictly increasing (this holds with probability 1).
Accordingly, the sequence $(z_0,\ldots,z_\ell)$ is strictly decreasing.
The following has been shown in~\cite{Stefankovic:JACM09,Huber:Gibbs}.
\begin{lemma} For any $i\in[0,\ell-1]$ there holds
$\delta_i\le z_i-z_{i+1}$ and also
$$
\frac{-z'(\beta_{i})}{-z'(\beta_{i+1})} \ge \exp(2\delta_i/(z_i-z_{i+1})) 
$$ 
\end{lemma}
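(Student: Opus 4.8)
The plan is to derive both inequalities directly from the two elementary analytic facts about $z(\cdot)$ recorded in Section~2: that $z'<0$ everywhere (since $H$ is non-negative and non-constant) and that $z$ is strictly convex, so that $-z'(\cdot)$ is a positive, strictly decreasing function. Throughout write $a=\beta_i$, $b=\beta_{i+1}$, $\bar\beta=\frac{a+b}2$ and $h=\frac{b-a}2>0$ (recall $\beta_i<\beta_{i+1}$ holds with probability $1$). Introduce the two ``half-increments'' $P=z(a)-z(\bar\beta)$ and $Q=z(\bar\beta)-z(b)$, both strictly positive by monotonicity, and note that $P+Q=z(a)-z(b)=z_i-z_{i+1}$ while $P-Q=z(a)-2z(\bar\beta)+z(b)=\delta_i$; since $z$ is convex, $\delta_i\ge 0$, hence $P\ge Q$. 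The first inequality is then immediate: $\delta_i=P-Q\le P+Q=z_i-z_{i+1}$ (indeed $\delta_i<P\le z_i-z_{i+1}$).

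For the second inequality I would factor the claim through the chain $\ln\frac{-z'(a)}{-z'(b)}\ \ge\ \ln\frac{P}{Q}\ \ge\ \frac{2(P-Q)}{P+Q}=\frac{2\delta_i}{z_i-z_{i+1}}$, after which exponentiating finishes the proof. The right-hand inequality $\ln\frac{P}{Q}\ge\frac{2(P-Q)}{P+Q}$ is the scalar estimate $\ln x\ge\frac{2(x-1)}{x+1}$ applied to $x=\frac{P}{Q}\ge 1$: both sides agree at $x=1$, and the derivative of $\ln x$ minus that of $\frac{2(x-1)}{x+1}$ equals $\frac{(x-1)^2}{x(x+1)^2}\ge 0$, so it is a one-line check (equivalently, it is the logarithmic-mean/arithmetic-mean inequality). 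The left-hand inequality $\frac{-z'(a)}{-z'(b)}\ge\frac{P}{Q}$ is where monotonicity of $-z'$ enters: the substitution $t\mapsto t-h$ rewrites $P=\int_a^{\bar\beta}(-z'(t))\,dt$ as $\int_{\bar\beta}^{b}(-z'(t-h))\,dt$, while $Q=\int_{\bar\beta}^{b}(-z'(t))\,dt$; for every $t\in[\bar\beta,b]$ we have $-z'(t-h)\le -z'(a)$ and $-z'(t)\ge -z'(b)$ (because $-z'$ is decreasing and $t-h\ge a$, $t\le b$), hence $-z'(t-h)\le\frac{-z'(a)}{-z'(b)}\,(-z'(t))$ pointwise, and integrating over $[\bar\beta,b]$ gives $P\le\frac{-z'(a)}{-z'(b)}\,Q$.

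I do not anticipate a genuine obstacle; the one idea that must be spotted is to compare $\frac{-z'(\beta_i)}{-z'(\beta_{i+1})}$ against the ratio $P/Q$ of the two half-increments rather than directly against $\exp(2\delta_i/(z_i-z_{i+1}))$, which cleanly splits the argument into a pure monotonicity step and a pure scalar inequality. The only points requiring a word of care are that $-z'$ is strictly positive (needs $H$ non-constant, which is assumed) and strictly decreasing (needs strict convexity of $z$), and that $a,b,\bar\beta$ are finite and distinct, so that the integral representations and the shift substitution are valid.
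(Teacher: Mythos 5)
Your proof is correct and follows essentially the same route as the paper: both compare $\frac{-z'(\beta_i)}{-z'(\beta_{i+1})}$ to the ratio $P/Q$ of half-increments (the paper's $\frac{z_i-\bar z}{\bar z-z_{i+1}}$) via convexity of $z$, and both then apply the scalar inequality $\ln\frac{P}{Q}\ge\frac{2(P-Q)}{P+Q}$, which the paper writes in the equivalent form $\frac{1+\lambda}{1-\lambda}\ge e^{2\lambda}$ with $\lambda=\delta_i/(z_i-z_{i+1})$. Your integral derivation of $P\le h\cdot(-z'(\beta_i))$ and $Q\ge h\cdot(-z'(\beta_{i+1}))$ is just a slightly more explicit rendering of the paper's one-line appeal to convexity.
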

\begin{proof}
Denote $\bar \beta=(\beta_i+\beta_{i+1})/2$ and $\bar z=z(\bar \beta)$, then $\delta_i=z_{i+1}-2\bar z+z_i$.
 Since $z(\cdot)$ is a convex strictly decreasing function, 
we have
$$
-z'(\beta_{i})\ge\frac{z_i-\bar z}{\bar\beta-\beta_i}\qquad 
-z'(\beta_{i+1})\le\frac{\bar z-z_{i+1}}{\beta_{i+1}-\bar\beta}
$$
Since $\bar\beta-\beta_i=\beta_{i+1}-\bar\beta$, taking the ratio  gives the second claim of the lemma:
$$
\frac{-z'(\beta_{i})}{-z'(\beta_{i+1})}
\ge\frac{z_i-\bar z}{\bar z-z_{i+1}}
=\frac{\frac{1}{2}(z_i-z_{i+1}+\delta_i)}{\frac{1}{2}(z_i-z_{i+1}-\delta_{i})}=\frac{1+\lambda}{1-\lambda}\ge e^{2\lambda}
$$
where we denoted $\lambda=\frac{\delta_i}{z_i-z_{i+1}}\ge 0$ and observed that $\lambda<1$ since $1-\lambda=2\frac{\bar z-z_{i+1}}{z_i-z_{i+1}}>0$.
The fact that $\lambda<1$ also gives the first claim of the lemma.
\end{proof}

Let us define $s(\beta)=\ln [-z'(\beta)]$ and $s_i=\ln [-z'(\beta_i)]$ for $i\in[0,\ell]$,
then function $s(\cdot)$ and the sequence $(s_0,\ldots,s_\ell)$ are strictly decreasing.
Since $z(\beta)$ and $s(\beta)$ are continuous strictly decreasing functions of~$\beta$,
we can uniquely express $z$ via $s$ and define a continuous strictly increasing function $z(s)$ on the interval $S\eqdef[s_\ell,s_0]$
 (Fig.~\ref{fig:plots}(b)). Note, with some abuse of notation we use $z(\cdot)$
for two different functions: one of argument $\beta$, and another one of argument $s$.
The exact meaning should always be clear from the context.

The inequality in the last lemma for an interval $i\in[0,\ell-1]$ can be rewritten as follows:
\begin{eqnarray}
2\delta_i & \le &  (z_i-z_{i+1}) \cdot (s_{i} - s_{i+1}) 
\end{eqnarray}
Equivalently, we have  $2\delta_i\le Area(\Delta_i)$
where $\Delta_i\subseteq[s_\ell,s_0]\times[z_\ell,z_0]$ is the rectangle with the top right corner
at $(s_i,z_i)$ and the bottom left corner at $(s_{i+1},z_{i+1})$ (Fig.~\ref{fig:plots}(b)).
Let $\Delta^+$ be the union of rectangles $\Delta_i$ corresponding to large intervals $i$ (with $|z_i-z_{i+1}|>\tau\cdot\frac{1}{m}$),
and $\Delta^-$ be the union of $\Delta_i$ corresponding to small intervals $i$.
Then $2\delta^+\le Area(\Delta^+)$ and $2\delta^-\le Area(\Delta^-)$.

\begin{figure}[t]
\begin{center}
		\begin{tabular}{c@{\hspace{60pt}}c@{\hspace{60pt}}c} 
			\includegraphics[scale=0.6]{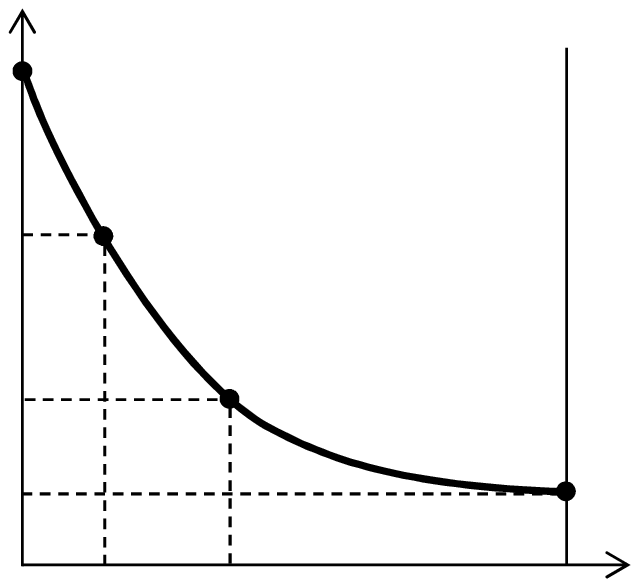} 
			\begin{picture}(0, 0)
				\put(-22,-6){\footnotesize $\bmax$} 
				\put(-115,-6){\footnotesize $\bmin$} 
				\put(-121,87){\footnotesize $z_0$} 
				\put(-121,59){\footnotesize $z_1$} 
				\put(-121,31){\footnotesize $z_2$} 
				\put(-121,14){\footnotesize $z_3$} 
				\put(-8,8){\footnotesize $\beta$} 
				\put(-107,100){\footnotesize $z(\beta)$} 
			\end{picture}
&
			\includegraphics[scale=0.6]{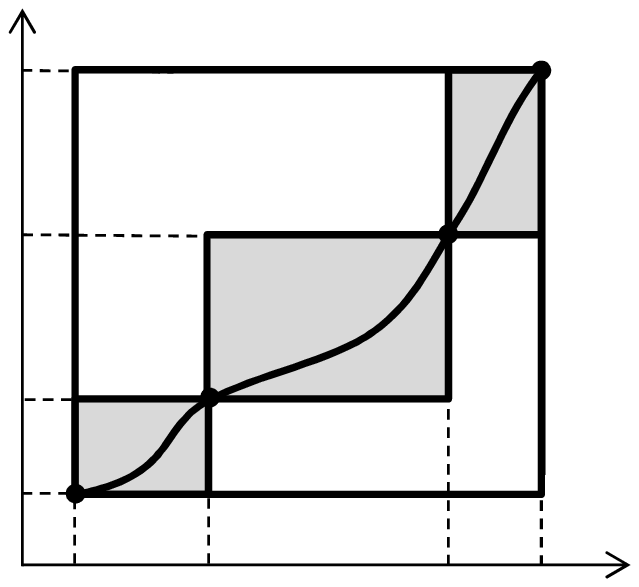} 
			\begin{picture}(0, 0)
				\put(-22,-4){\footnotesize $s_0$} 
				\put(-39,-4){\footnotesize $s_1$} 
				\put(-80,-4){\footnotesize $s_2$} 
				\put(-103,-4){\footnotesize $s_3$} 
				\put(-121,87){\footnotesize $z_0$} 
				\put(-121,59){\footnotesize $z_1$} 
				\put(-121,31){\footnotesize $z_2$} 
				\put(-121,14){\footnotesize $z_3$} 
				\put(-7,7){\footnotesize $s(\beta)$} 
				\put(-107,100){\footnotesize $z(\beta)$} 
				\put(-96,23){\small $-$} 
				\put(-64,47){\small $+$} 
				\put(-35,80){\small $+$} 
			\end{picture}  
&
			\includegraphics[scale=0.6]{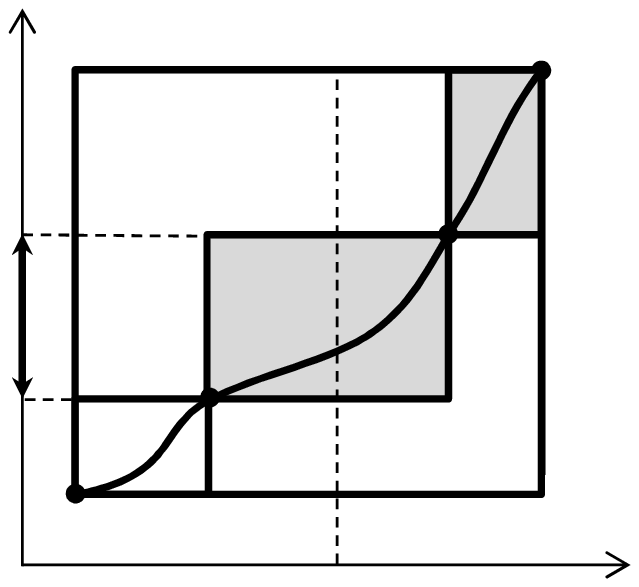} 
			\begin{picture}(0, 0)
				\put(-58,-4){\footnotesize $s$} 
				\put(-125,43){\footnotesize $\eta^+_s$} 
			\end{picture}  \\
~ \vspace{0pt} \\
\small (a) & \small (b) & \small (c) \vspace{-14pt}
		\end{tabular}
	\end{center}
	\caption{(a) $z(\beta)=\ln Z(\beta)$ is a strictly convex decreasing function. Four dots show a possible output of Algorithm~\ref{alg:sort}.
Here $\ell=3$ and $[\beta_0,\beta_\ell]=[\bmin,\bmax]$.
                 (b) Definitions of the sets $\Delta_i$ (in gray). Intervals $0$ and $1$ are assumed to be large, while interval $2$ is small. (c) Definition of the variable $\eta^+_s$ (in the case of a large interval). 
        }
\label{fig:plots}
\end{figure}

By geometric considerations it should be clear that 
$$
Area(\Delta^-)
\;\;\le\;\;  \max\left\{|z_i-z_{i+1}|\::\:\mbox{$i$ is small}\right\}\cdot|S|
\;\;\le\;\; \tau\cdot\mbox{$\frac{1}{m}$}\cdot|S|
$$
Observe that  $-z'(\beta)=\E_{X\sim \mu_\beta}[H(X)]\in[1,n]$ for any $\beta$, and therefore $S=[s_\ell,s_0]\subseteq[0,\ln n]$
and so $|S|\le\ln n$.
This establishes the first claim of Lemma~\ref{lemma:delta-bound}. Next, we focus on proving the second claim.

For a point $s\in S$ let $\eta_s$ be the length of the interval $(z_{i+1},z_i)$
into which $z(s)$ falls (or $0$, if $z(s)\in\{z_\ell,\ldots,z_0\}$).
Also let  $\eta^+_s=\psi[\eta_s]$ where $\psi[\cdot]$ is the following function:
 $\psi[a]=a$ if $a>\tau\cdot \frac{1}m$, and $\psi[a]=0$ otherwise.
Thus, if $z(s)\in(z_{i+1},z_i)$ for some large interval $i$ then $\eta^+_s=z_{i}-z_{i+1}$ 
 (Fig.~\ref{fig:plots}(c)),
otherwise $\eta^+_s=0$.
We have
$$
Area(\Delta^+)
=\int_{S}\eta^+_sds
$$
The linearity of expectation  gives
\begin{eqnarray}
2\E[\delta^+] 
&\le& \E(Area(\Delta^+))
\;\;=\;\; \int_{S} \E[\eta^+_s]ds 
\;\;\le\;\; \max_{s\in S}\E[\eta^+_s]\cdot |S|
\label{eq:expectation-linearity}
\end{eqnarray}

Now let $X_0,X_1,X_2,\ldots$ be a Poisson process on $[0,+\infty)$
and $X_{-1},X_{-2},\ldots$ be a Poisson process on $(-\infty,0]$ (both with rate $k$).
Thus, 
$X_i=\xi_0+\ldots+\xi_{i}$ for $i\ge 0$
and 
 $X_i=-\xi_{-1}-\xi_{-2}-\ldots-\xi_{i}$ for $i\le -1$, where $\xi_j$ are i.i.d.\ variables from the exponential distribution of rate $k$.
By the superposition theorem for Poisson processes~\cite[page 16]{Kingman:PPP},
bidirectional sequence  ${\bf X}=\ldots,X_{-2},X_{-1},X_0,X_1,X_2,\ldots$ 
is a Poisson process  on $(-\infty,+\infty)$ (again with rate $k$),
and in particular it is translation-invariant.

Let ${\bf Y}=\ldots,Y_{-2},Y_{-1},Y_0,Y_1,Y_2,\ldots$ be the following process:
draw an integer $c\in\mbox{$\{0,\ldots,d-1\}$}$ uniformly at random and
then set $Y_i=X_{di+c}$ for each $i$.
It can be seen that  ${\bf Y}$ models 
the output $(\beta_0,\ldots,\beta_\ell)$ of Algorithm~\ref{alg:sort} as follows: 
take the sequence $z(\bmin)-Y_0,z(\bmin)-Y_1,z(\bmin)-Y_2,\ldots$,
restrict to $[z(\bmax),z(\bmin)]$ and  append $z(\bmin)$ and $z(\bmax)$.
Then the resulting sequence has the same distribution as $(z(\beta_0),\ldots,z(\beta_\ell))$.
We assume below that $(\beta_0,\ldots,\beta_\ell)$
is generated by this procedure.

For a point $a\in\mathbb R$ let $\theta_a$ be the length of the interval $(Y_i,Y_{i+1})$ into which $a$ falls
(or $0$, if no such interval exists). Note,
the distribution of random variable $\theta_a$ does not depend on $a$ (since  process ${\bf Y}$
is translation-invariant). We also denote 
 $\theta^+_a=\psi[\theta_a]$, and let $\theta$ and $\theta^+=\psi[\theta]$ be random variables with the same
distributions as  $\theta_a$ and $\theta^+_a$, respectively (for any fixed $a$).
Clearly, for each $s\in[s_\ell,s_0]$ we have $\eta_s\le \theta_a$ and $\eta^+_s\le \theta^+_a$ for a suitably chosen $a$,
namely, $a=z(\bmin)-z(s)$. (Note, if $z(s)\in(z_{\ell-1},z_1)$ then $\eta_s= \theta_a$ and $\eta^+_s= \theta^+_a$,
but at the boundaries the inequalities may be strict).
We thus have
\begin{equation}
\E[\eta^+_s]\le \E[\theta^+]
\label{eq:eta-nu}
\end{equation}

\begin{lemma}\label{lemma:nu-distribution}
Variable $\theta$ has the gamma (Erlang) distribution with shape parameter $d+1$ and rate $k$,
whose probability density is  $f(t)=k^{d+1}t^de^{-kt} / d!$ for $t\ge 0$.
\end{lemma}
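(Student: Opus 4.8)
The plan is to recognize $\theta$ as the length of the renewal interval straddling a fixed point in the stationary point process ${\bf Y}$, and to compute this length directly from the underlying Poisson structure of ${\bf X}$. By translation invariance of ${\bf Y}$ we may take the fixed point to be $0$. Let $j$ be the a.s.\ unique index with $X_j < 0 < X_{j+1}$. The ${\bf Y}$-interval containing $0$ is $(X_m,X_{m+d})$, where $m$ is the unique index in $\{j-d+1,\ldots,j\}$ with $m\equiv c\pmod d$; writing $K=j-m\in\{0,1,\ldots,d-1\}$, one checks that $K=(j-c)\bmod d$ is uniformly distributed on $\{0,\ldots,d-1\}$ and independent of ${\bf X}$, precisely because $c$ is drawn uniformly and independently. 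Hence $\theta$ has the same distribution as $X_{j-K+d}-X_{j-K}$ with $K$ as above.

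Next I would decompose this length as
\[
X_{j-K+d}-X_{j-K} \;=\; \underbrace{(0-X_j)}_{A} \;+\; \underbrace{(X_{j+1}-0)}_{B} \;+\; \underbrace{(X_j-X_{j-K})}_{\text{left part}} \;+\; \underbrace{(X_{j-K+d}-X_{j+1})}_{\text{right part}},
\]
which is valid since $j-K\le j$ and $j-K+d\ge j+1$ for $K\in\{0,\ldots,d-1\}$. By the standard description of a Poisson process around a fixed point, the backward recurrence time $A$ and the forward recurrence time $B$ are independent and each $\mathrm{Exp}(k)$, and conditionally on $j$, $A$ and $B$ the inter-arrival times of ${\bf X}$ to the left of $X_j$ and to the right of $X_{j+1}$ are all i.i.d.\ $\mathrm{Exp}(k)$. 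The ``left part'' is a sum of $K$ of the former and the ``right part'' a sum of $(d-1-K)$ of the latter, contributing $d-1$ i.i.d.\ $\mathrm{Exp}(k)$ variables in total, independent of $A,B$ and of $K$. Therefore $\theta$ is a sum of $(d+1)$ i.i.d.\ $\mathrm{Exp}(k)$ variables --- and the count $d+1$ does not depend on $K$, so no averaging over $K$ is needed --- i.e.\ $\theta$ has the gamma (Erlang) distribution with shape $d+1$ and rate $k$, with density $f(t)=k^{d+1}t^de^{-kt}/d!$ for $t\ge0$, as claimed.

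The part requiring care is the independence bookkeeping: that $K$ is genuinely uniform on $\{0,\ldots,d-1\}$ and independent of ${\bf X}$ (this is exactly why the random phase $c$ was built into the definition of ${\bf Y}$), and that, conditioned on the straddling interval of ${\bf X}$, the portions of ${\bf X}$ to its left and to its right are independent Poisson processes. Both are classical facts, but they are the real content; once they are in place the computation is routine. An alternative route giving the same answer more quickly is to invoke the inspection-paradox / size-biasing fact from renewal theory: ${\bf Y}$ is a stationary renewal process whose consecutive gaps are i.i.d.\ $\mathrm{Gamma}(d,k)$ (disjoint blocks of $d$ i.i.d.\ $\mathrm{Exp}(k)$ gaps of ${\bf X}$) with mean $d/k$, so the gap straddling a fixed point has the size-biased density $t\,g(t)/(d/k)=k^{d+1}t^de^{-kt}/d!$.
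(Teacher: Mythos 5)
Your proof is correct and is essentially the paper's argument: both identify $\theta$ as the ${\bf Y}$-gap straddling the origin and observe that it is a sum of exactly $d+1$ i.i.d.\ $\mathrm{Exp}(k)$ inter-arrival gaps of ${\bf X}$, with the count independent of the random phase $c$. The paper reaches this in one line by noting that in its construction the straddling ${\bf X}$-interval is always $(X_{-1},X_0)$, so the ${\bf Y}$-interval containing $0$ is simply $X_c-X_{c-d}=\xi_{c-d}+\cdots+\xi_c$; your backward/forward-recurrence decomposition specializes to exactly this.
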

\begin{proof}
We prove this fact for variable $\theta_a$ with $a=0$.
We know that $Y_{-1}=X_{c-d}\le 0$ and $Y_0=X_c\ge 0$, so $\theta_0=Y_0-Y_{-1}$ (with probability 1).
By construction, $X_c-X_{c-d}=\xi_{c-d}+\xi_{c-d+1}+\ldots+\xi_{c}$,
i.e.\ $\theta_0$ is a sum of $d+1$ i.i.d. exponential random variables each of rate $k$.
This implies the claim.
\end{proof}
\begin{remark}
It may seem counterintuitive that all intervals $\zeta_i=Y_{i}-Y_{i-1}$ are distributed
as a sum of $d$ exponential random variables with the exception of $i=0$,
in which case it is a sum of $d+1$ variables
(even though ${\bf Y}$ is translation-invariant). This can be viewed as an instance of the ``inspection paradox'',
discussed e.g.\ at~\cite{link:PoissonQuestion}.
Below we describe an alternative approach,
which may help to understand this phenomenon.

Let $\zeta$ be a sum of $d$ exponential random variables each of rate $k$
and $g(\cdot)$ be the probability density of $\zeta$.
Then the following (non-rigorous) argument shows that
the probability density of $\theta$ is 
 $tg(t)/\E[\zeta]$ (after which a simple calculation would prove the claim).

Let $L$ be some large number.
Since the distribution of $\theta_a$ does not depend on $a$,
we can define  $\theta$ as the output of the following process:
sample ${\bf Y}$, sample $a\in[0,L]$ uniformly at random, and then set $\theta=\theta_a$
(i.e.\ the length of the interval in ${\bf Y}$ into which $a$ falls).
Let us compute the probability that $\theta\in[t,t+dt]$. 
Process ${\bf Y}$ will have $L/\E[\zeta]$ intervals in $[0,L]$ on average,
and out of those $(L/\E[\zeta])\cdot (g(t)dt)$ intervals will have length in the range $[t,t+dt]$.
The combined length of such intervals is $(L/\E[\zeta])\cdot (g(t)dt)\cdot t$.
Thus, point $a$ will fall into one of those intervals with probability 
$(L/\E[\zeta])\cdot (g(t)dt)\cdot t / L
=(tg(t)/\E[\zeta])dt$. Therefore, the density of $\theta$ is $tg(t)dt/\E[\zeta]$.
\end{remark}

Recall that $\theta^+=\theta$ if $\theta>\tau/m$, and $\theta^+=0$ otherwise.
Lemma~\ref{lemma:nu-distribution} now gives
\begin{eqnarray*}
\E[\theta^+]
&=& \int\limits_{\tau/m}^{+\infty} tf(t)dt 
\;\;=\;\; \int\limits_{\tau/m}^{+\infty} \frac{k^{d+1}t^{d+1}e^{-kt}}{ d!}dt 
\;\;=\;\; \int\limits_{\tau d/k}^{+\infty} \frac{(kt)^{d+1}e^{-(kt)}}{k\cdot d!}d(kt) \\
&=&\frac{1}{k\cdot d!}\int\limits_{\tau d}^{+\infty} u^{d+1}\, e^{-u} du
\;\;=\;\;\frac{\Gamma(d+2,\tau d)}{k\cdot d!}
\;\;=\;\;\frac{\Gamma(d+2,\tau d)}{md\cdot d!}
\end{eqnarray*}

Combining this with~\eqref{eq:expectation-linearity} and~\eqref{eq:eta-nu} and observing again that $|S|\le \ln n$ finally gives the second claim of Lemma~\ref{lemma:delta-bound}.

\subsection{Proof of Lemma~\ref{lemma:delta-bound2}}\label{sec:lemma:delta-bound:proof2}
We will use the same notation as in the previous section.
Since  $H(\cdot)$ can now take value $0$, we have $-z'(\beta)=\E_{X\sim\mu_\beta}[H(X)]\in[0,n]$
and so $[s_\ell,s_0]\subseteq[-\infty,\ln n]$ (instead of $[s_\ell,s_0]\subseteq[0,\ln n]$, as in the previous section).
We will deal with small values of $s(\beta)$ exactly as in~\cite{Huber:Gibbs}.

Recall that $z'(\beta)$ is a strictly increasing function of $\beta$.
Let $\hat \beta$ be the unique value
with $z'(\hat\beta)=-\lambda$. (If it does not exist, then we take $\hat\beta\in\{-\infty,+\infty\}$ using the natural rule).
Denote  $\hat z=z(\hat\beta)$ and $\hat s=\ln[-z'(\hat\beta)]$.
Now introduce the following terminology for an interval $i\in[0,\ell-1]$:
\begin{itemize}
\item interval $i$ is {\em steep} if  $\beta_{i+1}\le\hat\beta$, or equivalently $s_{i+1}\ge\hat s$;
\item interval $i$ is {\em flat} if  $\beta_{i}\ge\hat\beta$, or equivalently $s_{i}\le\hat s$;
\item interval $i$ is {\em crossing} if  $\hat\beta\in(\beta_{i},\beta_{i+1})$, or equivalently $\hat s\in(s_{i+1},s_{i})$.
\end{itemize}
If steep intervals exist then $\hat\beta\ge\bmin$ and $z'(\hat\beta)\le -\lambda$.
(The inequality may be strict if $\hat\beta=+\infty$).
We thus have $[s_{i+1},s_i]\subseteq[\hat s,s_0]\subseteq[\ln \lambda,\ln n]$ for all steep intervals $i$.
The argument from the previous section gives that 
$$
\sum_{\mbox{\small $i$: $i$ is steep and small}}\delta_i\le \frac{\tau}2 \cdot\frac{\ln\frac n\lambda}m  
\qquad\qquad
\E\left[\sum_{\mbox{\small $i$: $i$ is steep and large}}\delta_i\right]\le \frac{\Gamma(d+2,\tau d)}{2d\;\cdot \;d!}\cdot\frac{\ln\frac{n}\lambda}m
$$
(We just need to assume that $\bmax$ was replaced with $\min\{\bmax,\hat\beta\}$, then we would have $S=[s_\ell,s_0]\subseteq[\ln \lambda,\ln n]$
and $|S|\le\ln\frac{n}\lambda$ instead of $|S|\le\ln n$, the rest is the same as in the previous section).

Let us now consider flat intervals. The argument from~\cite{Huber:Gibbs} gives the following fact.
\begin{lemma} 
The sum of $\delta_i$ over flat intervals $i$ is at most $\ln\frac{1}{1-\lambda}$.
\end{lemma}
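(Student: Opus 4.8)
The plan is to show that the function $z(\cdot)=\ln Z(\cdot)$ changes by at most $\ln\frac{1}{1-\lambda}$ over the flat region $[\hat\beta,\bmax]$, and then to bound $\sum_{i\text{ flat}}\delta_i$ by this change via a telescoping argument combined with the elementary estimate $\delta_i\le z_i-z_{i+1}$.

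First I would observe that, since $z(\cdot)$ is decreasing and $(\beta_i+\beta_{i+1})/2<\beta_{i+1}$, we have $z\!\left((\beta_i+\beta_{i+1})/2\right)>z_{i+1}$, hence $\delta_i=z_i-2\,z\!\left((\beta_i+\beta_{i+1})/2\right)+z_{i+1}\le z_i-z_{i+1}$ for every interval $i$ (this is the inequality recorded in the first lemma of this section). Because the $\beta_i$ are increasing, the flat intervals are precisely those of index at least $j$, where $j$ is the smallest index with $\beta_j\ge\hat\beta$; summing the previous inequality over $i=j,\ldots,\ell-1$ telescopes to $\sum_{i\text{ flat}}\delta_i\le z_j-z_\ell=z(\beta_j)-z(\bmax)$. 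If there are no flat intervals the sum is empty and the claim is trivial, since $\ln\frac{1}{1-\lambda}>0$.

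It then remains to show $z(\beta_j)-z(\bmax)\le\ln\frac{1}{1-\lambda}$, and the key step is the following bound on $Z(\beta)$ valid for all $\beta\ge\hat\beta$. On one hand, $-z'(\beta)=\E_{X\sim\mu_\beta}[H(X)]\le -z'(\hat\beta)=\lambda$, since convexity of $z$ makes $-z'$ non-increasing. On the other hand, as the terms with $H(x)=0$ contribute nothing to $\E_{\mu_\beta}[H]$, we get $-z'(\beta)=\sum_{x:H(x)\ge1}H(x)\mu_\beta(x)\ge\sum_{x:H(x)\ge1}\mu_\beta(x)=1-\frac{Z(+\infty)}{Z(\beta)}$, using $\sum_{x:H(x)=0}\mu_\beta(x)=Z(+\infty)/Z(\beta)$. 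Combining the two inequalities yields $Z(\beta)\le\frac{Z(+\infty)}{1-\lambda}$ for all $\beta\ge\hat\beta$. Since also $Z(\beta)\ge Z(+\infty)$ for every $\beta$, and both $\beta_j$ and $\bmax$ lie in $[\hat\beta,+\infty)$, we conclude $z(\beta_j)-z(\bmax)=\ln\frac{Z(\beta_j)}{Z(\bmax)}\le\ln\frac{Z(+\infty)/(1-\lambda)}{Z(+\infty)}=\ln\frac{1}{1-\lambda}$, as required.

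The only genuinely non-routine ingredient is the observation that the constraint $\E_{\mu_\beta}[H]\le\lambda$ confines $Z(\beta)$ to within a factor $\frac{1}{1-\lambda}$ of $Z(+\infty)$ on the flat region; given that, the remainder is just the telescoping sum and the convexity inequality $\delta_i\le z_i-z_{i+1}$.
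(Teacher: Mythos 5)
Your proof is correct and follows essentially the same route as the paper: it combines $\delta_i\le z_i-z_{i+1}$ with the bound $-z'(\beta)\ge 1-\frac{|\{x:H(x)=0\}|}{Z(\beta)}$ and $-z'(\beta)\le\lambda$ on the flat region to confine the total drop of $z$ there to $\ln\frac{1}{1-\lambda}$. The only (cosmetic) difference is that you state the inequality for all $\beta\ge\hat\beta$ and compare $Z(\beta)$ with $Z(+\infty)$, whereas the paper evaluates it once at $\hat\beta$ and compares directly with $Z(\bmax)\ge|\{x:H(x)=0\}|$.
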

\begin{proof}
Assume that flat intervals exist, then $\hat\beta\le\bmax$ and $z'(\hat\beta)\ge -\lambda$.
(The inequality may be strict if $\hat\beta=-\infty$).
Denote $\Omega_0=\{x\in\Omega\:|\:H(x)=0\}$ and $\Omega_+=\{x\in\Omega\:|\:H(x)\ge 1\}$, then $\Omega=\Omega_0\cup\Omega_+$ and
$$
\E_{X\sim\mu_{\hat\beta}}[H(X)]
=\frac{\sum_{x\in\Omega_+}\;\;H(x)e^{-\hat\beta H(x)} }{Z(\hat\beta)}
\ge\frac{\sum_{x\in\Omega_+}\;\;e^{-\hat\beta H(x)} }{Z(\hat\beta)}
=1-\frac{\sum_{x\in\Omega_0}\;\;e^{-\hat\beta H(x)} }{Z(\hat\beta)}
\ge 1-\frac{Z(\bmax)}{Z(\hat\beta)}
$$
On the other hand, $\E_{X\sim\mu_{\hat\beta}}[H(X)]=-z'(\hat\beta)\le \lambda$ and so
$
\frac{Z(\bmax)}{Z(\hat\beta)} 
\ge 1-\lambda
$ and $z(\hat\beta)-z(\bmax)\le \ln\frac 1{1-\lambda}$.
For all flat intervals $i$ we have $[z_{i+1},z_i]\subseteq [z(\bmax),z(\hat\beta)]$
and also $\delta_i\le z_i-z_{i+1}$. This gives the claim of the lemma.
\end{proof}

It remains to consider  crossing intervals.
Let us define values $\delta_c^-$ and $\delta_c^+$ as follows.
If there are no crossing intervals then $\delta_c^-=\delta_c^+=0$.
Otherwise let $i$ be the unique crossing interval;
if $i$ is small then set $(\delta_c^-,\delta_c^+)=(\delta_i,0)$,
and if $i$ is large then set $(\delta_c^-,\delta_c^+)=(0,\delta_i)$.
In all cases we have $\delta_c^-\le \frac{\tau}m$ (since $\delta_i\le z_i-z_{i+1}$).
Also, $\E[\delta_c^+]\le \E[\psi(z_i-z_{i+1})]\le\E[\theta^+]\le \frac{\Gamma(d+2,\tau d)}{md\;\cdot\; d!}$
where function $\psi(\cdot)$ and random variable $\theta^+$ were defined in the previous section.

We can finally prove Lemma~\ref{lemma:delta-bound2}. 
Define $\delta^-$ as $\delta_c^-$ plus the sum of $\delta_i$ over small steep intervals~$i$ and flat intervals $i$.
Define $\delta^+$ as $\delta_c^+$ plus the sum of $\delta_i$ over large steep intervals $i$.
By collecting inequalities above we obtain the desired claim.

\newcommand{\zdiff}{z_{\tt diff}}
\newcommand{\I}{\mathbb I}

\section{Lower bound}\label{sec:LowerBound}
In this section we establish a lower bound on the number of calls to the sampling oracles
for estimating $q=\ln \frac{Z(\bmin)}{Z(\bmax)}$. First, we describe our model of computation and
the set of instances that we allow.

We assume that the estimation algorithm
only receives  values  $H(x)$ from the sampling oracle, and not individual states $x\in\Omega$.
This means that an instance can be defined by counts
$c_h=|\{x\in\Omega\:|\:H(x)=h\}|$ for values $h$ in the range of $H$; these counts uniquely
specify the partition function $Z(\beta)=\sum_h c_h e^{-\beta h}$
and the distribution of sampling oracle outputs for a given $\beta$.
We will thus view an instance as a triplet $\Gamma=(c[\Gamma],\bmin[\Gamma],\bmax[\Gamma])$
where $c[\Gamma]:\mathbb R\rightarrow\mathbb Z_{\ge 0}$ is a function with a finite non-empty support.
When the instance is clear from the context, we will omit the square brackets and write simply $\Gamma=(c,\bmin,\bmax)$.
For a value $h\in{\tt supp}(c)$ let 
$\psi(\beta,h\:|\:\Gamma)$ be the probability that the sampling oracle returns value $h$ when queried at $\beta$
in instance~$\Gamma$:
$$
\psi(\beta,h\:|\:\Gamma)=c_h e^{-\beta h}/Z(\beta)
$$
For a finite subset $\calH\subseteq\mathbb R$ let $\I(\calH)$ be the set
of instances $\Gamma=(c,0,\bmax)$ satisfying ${\tt supp}(c)\subseteq\calH$.
Also for a subset  $\calQ\subseteq\mathbb R$
let $\I(\calH,\calQ)=\{\Gamma\in\I(\calH)\:|\:q^\ast(\Gamma)\in \calQ\}$,
where we denoted $q^\ast(\Gamma)=\ln \frac{Z(0)}{Z(\bmax)}$.

An estimation algorithm $\calA$ applied to instance $\Gamma=(c,0,\bmax)\in\I(\calH)$ is assumed to have the following form.
At step $i$ (for $i=1,2,\ldots$) it does one of the following two actions:
\begin{enumerate}
\item Call the samping oracle for some value $\beta_i\in\mathbb R$.
The oracle then returns a random variable $h_i\in\calH$ with 
 $\P(h_i=h)=\psi(\beta_i,h\:|\:\Gamma)$
for each $h\in\calH$.
\item Output some  estimate $\hat q$ and terminate.
\end{enumerate}
The $i$-th action is a random variable that can depend only on the set ${\tt supp}(c)$, values $\bmin,\bmax$, and on the previously observed sequence $(\beta_1,h_1),\ldots,(\beta_{i-1},h_{i-1})$.
 The output $\hat q$ of the algorithm will be denoted as $q^{\calA}(\Gamma)$,
and the expected number of calls to the sampling oracle as $T^{\calA}(\Gamma)$.

We say that algorithm $\calA$ is an {\em $(\varepsilon,\delta)$-estimator for instance $\Gamma$}
if $\P[|q^{\calA}(\Gamma)-q^\ast(\Gamma)|> \varepsilon]<\delta$. 
We can now formulate our main theorem.
\begin{theorem}\label{th:LowerBound}
There exist positive numbers $q_{\min},n_{\min},c_1,c_2,c_3$ such that the following holds for all 
$q\ge q_{\min}$,
$n\ge n_{\min}$ with $n\in\mathbb Z$, 
$\varepsilon\in(0,c_1 q)$, $\delta\in(0,\frac{1}{4})$.

Denote $m=\left\lceil \frac{c_2\sqrt{q}}{n} \right\rceil$ and $\calH_n^m=\{h\in[1,n]\::\:mh\in\mathbb Z\}$.
Suppose that $\calA$ is an $(\varepsilon,\delta)$-estimator for all instances in $\I(\calH_n^m,\left[\frac{2q}3,\frac{4q}3\right])$.
Then there exists instance $\Gamma\in\I(\calH_n^m,\left[\frac{2q}3,\frac{4q}3\right])$
such that $T^\calA(\Gamma)\ge c_3 q \varepsilon^{-2}\ln \delta^{-1} $.
\end{theorem}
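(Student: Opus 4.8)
The plan is to reduce the problem to a hypothesis-testing lower bound between two carefully chosen instances whose partition-function ratios $q$ differ by roughly $\varepsilon$, but which are statistically hard to distinguish using few oracle calls. First I would construct a one-parameter family of instances: fix a ``base'' instance with counts supported on a single point $h^\star$ of $\calH_n^m$ (or on two points), scaled so that $q^\ast$ lands near $q$, and then perturb the counts slightly to obtain a second instance $\Gamma'$ with $|q^\ast(\Gamma)-q^\ast(\Gamma')|\approx 2\varepsilon$, while keeping both in $\I(\calH_n^m,[\tfrac{2q}{3},\tfrac{4q}{3}])$. The key point is that if $\calA$ is an $(\varepsilon,\delta)$-estimator for \emph{all} instances in this class, then on input $\Gamma$ versus $\Gamma'$ its output must, with probability $>1-\delta$, fall in disjoint intervals; hence $\calA$ yields a test distinguishing $\Gamma$ from $\Gamma'$ with error probability $<2\delta<\tfrac12$.

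Next I would lower-bound the sample complexity of any such test. The observations available to $\calA$ are the oracle outputs $h_i\in\calH_n^m$ at adaptively chosen inverse temperatures $\beta_i$; for a fixed $\beta$, the output distribution is the categorical law $\psi(\beta,\cdot\mid\Gamma)$ on $\calH_n^m$. The perturbation of the counts should be designed so that for \emph{every} $\beta\in\mathbb R$, the KL divergence $D_{\mathrm{KL}}\bigl(\psi(\beta,\cdot\mid\Gamma)\,\|\,\psi(\beta,\cdot\mid\Gamma')\bigr)$ is at most $O(\varepsilon^2/q)$. This is the heart of the construction: we need a perturbation large enough to shift $q^\ast$ by $\Theta(\varepsilon)$ yet small enough that no single query reveals much. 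Choosing $m=\lceil c_2\sqrt q/n\rceil$ — equivalently, resolution $1/m=\Theta(n/\sqrt q)$ on $[1,n]$ — is exactly what makes $\calH_n^m$ fine enough to realize a smooth-looking perturbation while $|\calH_n^m|=\Theta(mn)=\Theta(\sqrt q)$ stays controlled; the $\sqrt q$ scaling is what converts a per-query information bound of $O(\varepsilon^2/q)$ into the claimed $\Omega(q\varepsilon^{-2})$. Given the uniform per-query KL bound, a standard argument (data-processing / chain rule for KL along the adaptive transcript, or equivalently Le Cam's two-point method with Wald's identity) shows that distinguishing $\Gamma$ from $\Gamma'$ with error $<2\delta$ requires expected number of queries $T^\calA\ge \dfrac{c\,\ln\delta^{-1}}{\max_\beta D_{\mathrm{KL}}(\psi(\beta,\cdot\mid\Gamma)\|\psi(\beta,\cdot\mid\Gamma'))}\ge c_3\,q\,\varepsilon^{-2}\ln\delta^{-1}$, for one of the two instances.

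The main obstacle, and where I would spend most of the effort, is the explicit construction of the perturbed instance together with the two quantitative estimates it must simultaneously satisfy: (i) $|q^\ast(\Gamma)-q^\ast(\Gamma')|=\Theta(\varepsilon)$, which amounts to controlling $\ln\bigl(Z(0)/Z(\bmax)\bigr)$ under the perturbation; and (ii) $\sup_\beta D_{\mathrm{KL}}(\psi(\beta,\cdot\mid\Gamma)\|\psi(\beta,\cdot\mid\Gamma'))=O(\varepsilon^2/q)$ uniformly over all $\beta$, including $\beta\to\pm\infty$ where the categorical distribution degenerates to point masses. The natural device is to take the counts to encode a (discretized) Gaussian-like or geometric profile in $h$ over a window of width $\Theta(1/m)$, and perturb its center or scale by an amount $\Theta(\varepsilon/\sqrt q)$; then the KL at each $\beta$ is a tilted version of the KL between two such profiles, which is $\Theta((\varepsilon/\sqrt q)^2)=\Theta(\varepsilon^2/q)$, and the $\beta$-tilt only reweights a smooth family so the bound stays uniform. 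The conditions $q\ge q_{\min}$, $n\ge n_{\min}$, and $\varepsilon<c_1 q$ are exactly the slack needed to (a) ensure $m\ge 1$ and $\calH_n^m$ has enough points, (b) keep both $q^\ast$ values inside $[\tfrac{2q}{3},\tfrac{4q}{3}]$, and (c) make the Taylor estimates in (i)–(ii) valid with absolute constants. Once those two inequalities are in hand, assembling the reduction and invoking Le Cam is routine.
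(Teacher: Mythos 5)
Your high-level framework (reduce to hypothesis testing between instances whose $q^\ast$ values differ by more than $2\varepsilon$, bound the per-query information by $O(\varepsilon^2/q)$ uniformly in $\beta$, and convert this into $\Omega(q\varepsilon^{-2}\ln\delta^{-1})$ queries via an adaptive-testing lower bound) is sound and parallels the paper's strategy. The paper implements it slightly differently: instead of KL plus Le Cam it uses an elementary pointwise likelihood-ratio argument with a \emph{symmetric pair} of alternatives obtained by tilting the counts, $c^\pm_h=c_h e^{\pm h\nu}$, so that the product of the two likelihood ratios equals $Z^2(\beta)/\bigl(Z(\beta-\nu)Z(\beta+\nu)\bigr)$ and all $h$-dependence cancels; this avoids the chain-rule/optional-stopping machinery and also sidesteps a support issue you gloss over (the tilt preserves ${\tt supp}(c)$, which the algorithm is given as input, whereas shifting the center of your profile would change the support and could be detected with zero queries).

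The genuine gap is the construction of the hard instance, which is the heart of the proof and which your sketch does not achieve. For a tilt-type perturbation the shift of $q^\ast$ is $\approx\rho\nu$ and the worst-case per-query information is $\approx\kappa\nu^2$, where $\rho=|z'(0)-z'(\bmax)|$ and $\kappa=\sup_\beta z''(\beta)=\sup_\beta {\rm Var}_{\mu_\beta}(H)$; so the scheme delivers a factor of $q$ only if the base instance satisfies $\rho^2/\kappa=\Omega(q)$. Your proposed bases fail this: a single-point support is trivial ($q^\ast=\bmax h^\star$ is determined by data the algorithm already has); a two-point support has $\rho^2/\kappa=O(1)$; a Gaussian-like profile confined to a window of width $\Theta(1/m)$ lets the mean $\E_{X\sim\mu_\beta}[H(X)]$ sweep only $O(1/m)$, so again $\rho^2/\kappa=O(1)$; and a single geometric profile spanning $[1,n]$ also fails because all of its curvature concentrates at one value of $\beta$ (there $z''$ is of order $n^2$). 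In each case the argument yields only $\Omega(\varepsilon^{-2}\ln\delta^{-1})$, with no factor of $q$. What is needed is a multi-scale instance spanning a constant fraction of $[1,n]$ on $N=\Theta(\sqrt q)$ lattice points of spacing $1/m$, arranged so the curvature of $z$ is spread evenly over $\beta$: the paper takes $Z(\beta)=e^{-\beta}\prod_{k=1}^{N}(a_k+e^{-\beta/m})$ with $a_k=2^{1-k}$, i.e., a sum of $N$ two-point log-partition functions whose transition regions occur at geometrically staggered values of $\beta$, giving $\kappa\le 4/m^2$ uniformly in $\beta$ while $\rho\ge(N/2-2)/m$, hence $\rho^2/\kappa=\Omega(N^2)=\Omega(q)$, with $q^\ast=\frac{\ln 2}{2}N^2\pm O(mN)$ tuned to lie in $[\frac{2q}{3},\frac{4q}{3}]$. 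This is also the true role of $m=\lceil c_2\sqrt q/n\rceil$: one needs $N\approx\sqrt{2q/\ln 2}$ lattice steps of size $1/m$ to fit inside $[1,n]$. Without an instance achieving $\rho^2/\kappa=\Omega(q)$ (equivalently, a uniform-in-$\beta$ per-query information bound of $O(\varepsilon^2/q)$ for a pair whose $q^\ast$ differ by more than $2\varepsilon$), your plan does not reach the claimed bound.
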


\subsection{Proof of Theorem~\ref{th:LowerBound}}
The proof will be based on the following result. For brevity, we use notation $a\pm b$ to denote the closed interval $[a-b,a+b]$.
\begin{lemma}\label{lemma:LowerBound:1}
Suppose that $\calA$ is an $(\varepsilon,\delta)$-estimator
for instances  $\Gamma\in\I(\calH,\{q\})$ and $\Gamma_1,\ldots,\Gamma_d\in\I(\calH,\mathbb R\setminus(q\pm 2\varepsilon))$, 
where $\bmax[\Gamma_i]=\bmax[\Gamma]$ and ${\tt supp}(c[\Gamma_i])={\tt supp}(c[\Gamma])$ for $i\in[d]$.
Suppose that
\begin{equation}\label{eq:lemma:LowerBound:1}
 \prod_{i\in[d]}\frac{\psi(\beta,h\:|\:\Gamma_i)}{\psi(\beta,h\:|\:\Gamma)} \ge \gamma\qquad\quad\forall \beta\in\mathbb R,h\in{\tt supp}(c[\Gamma])
\end{equation}
for some constant $\gamma\in(0,1)$. Then $T^\calA(\Gamma)\ge \frac{(1-\delta'-\delta)d\ln (\delta'/\delta)}{\ln (1/\gamma)}$
for any constant $\delta'\in[\delta,1-\delta]$.
\end{lemma}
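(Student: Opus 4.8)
The statement is an information-theoretic lower bound, so the natural tool is a reduction to hypothesis testing combined with a change-of-measure (data-processing / Le Cam / Pinsker-style) argument, controlled by KL divergence along the transcript. The plan is to set up a testing problem in which an algorithm that is simultaneously a good estimator for $\Gamma$ (with $q^\ast=q$) and for each $\Gamma_i$ (with $q^\ast$ bounded away from $q$ by $2\varepsilon$) can be used to distinguish $\Gamma$ from the $\Gamma_i$'s: since the estimator's output lands in $q\pm\varepsilon$ under $\Gamma$ with probability $>1-\delta$ and lands outside $q\pm 2\varepsilon$ under each $\Gamma_i$ with probability $>1-\delta$ (because $q^\ast(\Gamma_i)\notin q\pm 2\varepsilon$ and the intervals $q\pm\varepsilon$ and $\mathbb R\setminus(q\pm 2\varepsilon)$ are disjoint), the event $\{q^{\calA}\in q\pm\varepsilon\}$ is a test that "accepts" $\Gamma$ with probability $>1-\delta$ under $\Gamma$ and accepts with probability $<\delta$ under each $\Gamma_i$.

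**Key steps.** First I would fix the coupling/measure-theoretic framework: let $\mathbb P_\Gamma$ and $\mathbb P_{\Gamma_i}$ denote the laws on the (random, possibly infinite) transcript $(\beta_1,h_1),(\beta_2,h_2),\ldots$ induced by running $\calA$ against the respective oracles. Because at each step the query $\beta_i$ is a deterministic-given-history function and only the oracle response is random, the likelihood ratio of the transcript factors as a product of per-step ratios $\psi(\beta_j,h_j\mid\Gamma_i)/\psi(\beta_j,h_j\mid\Gamma)$ — this is where the hypothesis ${\tt supp}(c[\Gamma_i])={\tt supp}(c[\Gamma])$ and $\bmax[\Gamma_i]=\bmax[\Gamma]$ matter, so the two processes are mutually absolutely continuous and the algorithm cannot distinguish them by a "forbidden symbol". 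Second, I would bound the KL divergence between the stopped transcripts: by the chain rule (Wald-type identity for KL along a stopping time, cf.\ the standard argument), $D_{\mathrm{KL}}(\mathbb P_\Gamma\,\|\,\mathbb P_{\Gamma_i}) \le \E_\Gamma[N]\cdot \sup_{\beta,h} \ln\frac{\psi(\beta,h\mid\Gamma)}{\psi(\beta,h\mid\Gamma_i)}$ where $N$ is the number of oracle calls; but a per-instance supremum is too weak, so instead I average over $i\in[d]$: $\frac1d\sum_i D_{\mathrm{KL}}(\mathbb P_\Gamma\|\mathbb P_{\Gamma_i}) \le \frac{T^\calA(\Gamma)}{d}\cdot\sup_{\beta,h}\sum_i\ln\frac{\psi(\beta,h\mid\Gamma)}{\psi(\beta,h\mid\Gamma_i)} \le \frac{T^\calA(\Gamma)}{d}\ln(1/\gamma)$ using exactly the product hypothesis \eqref{eq:lemma:LowerBound:1}. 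Third, lower-bound the same average KL by the testing obstruction: the test $\mathcal E=\{q^\calA\in q\pm\varepsilon\}$ has $\mathbb P_\Gamma(\mathcal E)\ge 1-\delta$ and $\mathbb P_{\Gamma_i}(\mathcal E)\le \delta$ for every $i$, and by the data-processing inequality for KL applied to the Bernoulli pushforward, $D_{\mathrm{KL}}(\mathbb P_\Gamma\|\mathbb P_{\Gamma_i}) \ge d_{\mathrm{KL}}(\mathbb P_\Gamma(\mathcal E)\,\|\,\mathbb P_{\Gamma_i}(\mathcal E)) \ge d_{\mathrm{KL}}(1-\delta\,\|\,\delta)$, or — to get the asymmetric $\delta'$ in the statement — introduce an auxiliary threshold: pick $\delta'\in[\delta,1-\delta]$, and observe $d_{\mathrm{KL}}(1-\delta\|\delta)\ge (1-\delta'-\delta)\ln(\delta'/\delta)$ by the elementary bound $d_{\mathrm{KL}}(a\|b)\ge a\ln\frac{a}{b} - H(a) \ge (a-b')\ln(\cdots)$; more cleanly, use $d_{\mathrm{KL}}(p\|q)\ge p\ln\frac{p}{q}+(1-p)\ln\frac{1-p}{1-q}$ and bound $\frac1d\sum_i d_{\mathrm{KL}}(\cdot\|\cdot)\ge (1-\delta'-\delta)\ln(\delta'/\delta)$ termwise. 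Combining the two bounds gives $T^\calA(\Gamma)\ln(1/\gamma)\ge d(1-\delta'-\delta)\ln(\delta'/\delta)$, which is the claim.

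**Main obstacle.** The delicate point is making the "chain rule for KL along a stopping time" rigorous when $N$ may be unbounded and when queries depend adaptively on history: I need the identity/inequality $D_{\mathrm{KL}}(\mathbb P_\Gamma^{\le N}\,\|\,\mathbb P_{\Gamma_i}^{\le N}) \le \E_\Gamma\!\left[\sum_{j=1}^{N}\E_\Gamma\!\big[\ln\tfrac{\psi(\beta_j,h_j\mid\Gamma)}{\psi(\beta_j,h_j\mid\Gamma_i)}\,\big|\,\mathcal F_{j-1}\big]\right]$, where the inner conditional expectation is $d_{\mathrm{KL}}(\psi(\beta_j,\cdot\mid\Gamma)\|\psi(\beta_j,\cdot\mid\Gamma_i))$, and then bound $d_{\mathrm{KL}}(\psi(\beta,\cdot\mid\Gamma)\|\psi(\beta,\cdot\mid\Gamma_i)) = \sum_h \psi(\beta,h\mid\Gamma)\ln\frac{\psi(\beta,h\mid\Gamma)}{\psi(\beta,h\mid\Gamma_i)}$ — but this is a single-$i$ quantity and must be averaged over $i$ before the product bound \eqref{eq:lemma:LowerBound:1} applies, since $\sum_i \ln\frac{\psi(\beta,h\mid\Gamma)}{\psi(\beta,h\mid\Gamma_i)} = -\ln\prod_i\frac{\psi(\beta,h\mid\Gamma_i)}{\psi(\beta,h\mid\Gamma)} \le \ln(1/\gamma)$. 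So the cleanest route is: work with $\frac1d\sum_i D_{\mathrm{KL}}$ throughout, swap the sum over $i$ inside the expectation over the transcript, use linearity to turn the per-step KL-sum into $\E_\Gamma[\sum_j (\text{term} \le \ln(1/\gamma))] = T^\calA(\Gamma)\ln(1/\gamma)/d$... wait, more carefully: $\frac1d\sum_i\E_\Gamma\big[\sum_j d_{\mathrm{KL},j,i}\big]=\E_\Gamma\big[\sum_j \frac1d\sum_i d_{\mathrm{KL},j,i}\big]\le\E_\Gamma[\sum_j \frac1d\ln(1/\gamma)] = \frac{T^\calA(\Gamma)\ln(1/\gamma)}{d}$, which is exactly the bound I want. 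The monotone convergence / optional-stopping justification for passing from finite truncations $N\wedge M$ to $N$ is routine but should be stated; everything else is elementary inequalities on binary KL.
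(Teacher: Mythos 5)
Your proof is correct, but it follows a genuinely different route from the paper's. You run the standard information-theoretic template: bound the transcript KL divergence from above by a Wald-type divergence decomposition along the stopping time, average over the $d$ alternatives so that the product hypothesis~\eqref{eq:lemma:LowerBound:1} turns the per-step sum $\sum_i \ln\frac{\psi(\beta,h|\Gamma)}{\psi(\beta,h|\Gamma_i)}$ into $\ln(1/\gamma)$, and bound the same quantity from below by data processing to the binary KL of the acceptance event, finishing with $d_{\mathrm{KL}}(1-\delta\,\|\,\delta)=(1-2\delta)\ln\frac{1-\delta}{\delta}\ge(1-\delta'-\delta)\ln(\delta'/\delta)$ (which is valid termwise for $\delta'\in[\delta,1-\delta]$). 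The paper instead avoids KL entirely: it truncates runs at $\tau=\frac{d\ln(\delta'/\delta)}{\ln(1/\gamma)}$ oracle calls, uses Markov's inequality to show that under the contradiction hypothesis the set $\calX^\ast$ of short, accepting runs has $\P^\calA(\calX^\ast|\Gamma)>\delta'$, applies the AM--GM inequality $\frac1d\sum_i\psi(x|\Gamma_i)\ge(\prod_i\psi(x|\Gamma_i))^{1/d}\ge\gamma^{\tau/d}\psi(x|\Gamma)$ pointwise on $\calX^\ast$, and integrates to contradict $\P^\calA(\calX^\ast|\Gamma_i)<\delta$. What the paper's truncation-plus-Markov trick buys is exactly the elimination of your acknowledged main obstacle: there is no need for a chain rule for KL along an adaptive, possibly unbounded stopping time, only explicit densities on finite transcripts. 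What your approach buys is a direct (non-contradiction) derivation and a marginally sharper intermediate bound ($d\cdot d_{\mathrm{KL}}(1-\delta\|\delta)/\ln(1/\gamma)$) before weakening to the stated form; but to make it rigorous you would need to carefully state and justify the divergence-decomposition lemma for adaptive queries with a stopping time (including the mutual absolute continuity coming from the equal supports, and the truncation/monotone-convergence passage you sketch), which is precisely the machinery the paper's argument is engineered to avoid.
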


\begin{proof}
The run of the algorithm can be described by a random variable $X=((\beta_1,h_1),\ldots,(\beta_t,h_t),\hat q)$,
where $t$ is the number of oracle calls (possibly infinite, in which case $\hat q$ is undefined).
Let $\calX$ be the set all possible runs, and 
 $\P^\calA(\cdot\:|\:\tilde\Gamma)$ be the probability measure of this random variable conditioned on
$\tilde\Gamma$ being the input instance. The structure of the algorithm implies that this measure can be decomposed as follows:
\begin{equation}
d\P^\calA(x\:|\:\tilde\Gamma)= \psi(x\:|\:\tilde\Gamma)\:d\mu^\calA(x)
\qquad\forall x\in\{((\beta_1,h_1),\ldots,(\beta_t,h_t),\hat q)\in\calX\:|\:\mbox{$t$ is finite}\}
\label{eq:GALSKGHAKSFJASG}
\end{equation}
where $\mu^\calA(\cdot)$ is some measure on $\calX$ that depends only on the algorithm $\calA$, and function $\psi(\cdot)$ is defined via
$$
\psi((\beta_1,h_1),\ldots,(\beta_t,h_t),\hat q\:|\:\tilde\Gamma)=\prod_{i\in[t]} \psi(\beta_i,h_i\:|\:\tilde\Gamma)
$$

Denote $\tau=\frac{d\ln(\delta'/\delta)}{\ln(1/\gamma)}$, and define the following subsets of $\calX$:
\begin{eqnarray*}
\calX^\ast & = & \{((\beta_1,h_1),\ldots,(\beta_t,h_t),\hat q)\in\calX \:\:|\:\: \mbox{$t\le \tau$ and $\hat q\in q\pm\varepsilon$  } \} \\
\calX' & = & \{((\beta_1,h_1),\ldots,(\beta_t,h_t),\hat q)\in\calX \:\:|\:\: \mbox{$t> \tau$ } \} \\
\calX'' & = & \{((\beta_1,h_1),\ldots,(\beta_t,h_t),\hat q)\in\calX \:\:|\:\: \mbox{$t\le \tau$ and $\hat q\notin q\pm\varepsilon$ } \} 
\end{eqnarray*}
Suppose the claim of Lemma~\ref{lemma:LowerBound:1} is false, i.e.\ $T^\calA(\Gamma)\le (1-\delta'-\delta)\cdot\tau$.
We have $T^\calA(\Gamma)\ge \P^\calA(\calX'\:|\:\Gamma)\cdot\tau$, and therefore
\begin{eqnarray}
\P^\calA(\calX'\:|\:\Gamma) & \le & 1-\delta-\delta' \qquad\qquad \label{eq:PcalX:a} 
\end{eqnarray}
Since $\calA$ is a $(\varepsilon,\delta)$-estimator for instances $\Gamma,\Gamma_1,\ldots,\Gamma_d$, we have
\begin{eqnarray}
\P^\calA(\calX''\:|\:\Gamma)~ & < & \delta \label{eq:PcalX:b} \\
\P^\calA(\calX^\ast\:|\:\Gamma_i) & < & \delta \qquad\qquad \forall i\in[d] \label{eq:PcalX:c} 
\end{eqnarray}
Set $\calX$ is a disjoint union of $\calX^\ast,\calX',\calX''$, therefore
$\P^\calA(\calX^\ast\:|\:\Gamma) =1-\P^\calA(\calX'\:|\:\Gamma)-\P^\calA(\calX''\:|\:\Gamma)>1-(1-\delta-\delta')-\delta=\delta'$.
Combining this with~\eqref{eq:PcalX:c} gives
\begin{eqnarray}
\frac{1}d\sum_{i\in[d]} \P^\calA(\calX^\ast\:|\:\Gamma_i) & < & \frac{\delta}{\delta'}\: \P^\calA(\calX^\ast\:|\:\Gamma)\label{eq:PcalX:e} 
\end{eqnarray}

Assumption~\eqref{eq:lemma:LowerBound:1} of the lemma gives that
\begin{equation}
\prod_{i\in[d]}\frac{\psi(x\:|\:\Gamma_i)}{\psi(x\:|\:\Gamma)}\;\;\ge\;\;  \gamma^t \;\;\ge\;\;\gamma^\tau\qquad\quad\forall x=((\beta_1,h_1),\ldots,(\beta_t,h_t),\hat q)\in\calX^\ast
\label{eq:ASGKASJHFASF}
\end{equation}
We can now write
\begin{equation}
\frac{1}{d}\sum_{i\in[d]}\psi(x\:|\:\Gamma_i) 
\;\ge\;\left(\prod_{i\in[d]}\psi(x\:|\:\Gamma_i)\right)^{1/d}
\;\ge\; \gamma^{\tau/d} \psi(x\:|\:\Gamma)
\;=\; \frac{\delta}{\delta'} \:\psi(x\:|\:\Gamma)
\qquad\quad\forall x\in\calX^\ast
\label{eq:GHAKDJFLAKGAS}
\end{equation}
where the first inequality is a relation between arithmetic and geometric means of non-negative numbers,
and the second inequality follows from~\eqref{eq:ASGKASJHFASF}.
We can write
$$
\frac{1}{d}\sum_{i\in[d]}\P^\calA(\calX^\ast\:|\:\Gamma_i)
\;\stackrel{\mbox{\tiny(a)}}=\; \frac{1}{d}\sum_{i\in[d]}\int_{\calX^\ast}\psi(x\:|\:\Gamma_i)\:d\mu^\calA(x)
\;\stackrel{\mbox{\tiny(b)}}\ge\; \frac{\delta}{\delta'} \int_{\calX^\ast}\psi(x\:|\:\Gamma)\:d\mu^\calA(x)
\;\stackrel{\mbox{\tiny(c)}}=\; \frac{\delta}{\delta'}\:\P^\calA(\calX^\ast\:|\:\Gamma)
$$
where (a,c) follow from~\eqref{eq:GALSKGHAKSFJASG} and (b) follows from~\eqref{eq:GHAKDJFLAKGAS}. We obtained a contradiction to~\eqref{eq:PcalX:e}.

\end{proof}

Recall that by definition coefficients of instances should be non-negative integers.
When using Lemma~\ref{lemma:LowerBound:1}, we can relax this requirement
to non-negative rationals (since multiplying coefficients by a constant does not affect quantities in Lemma~\ref{lemma:LowerBound:1}) 
and further to non-negative reals (since they can be approximated by rationals with an arbitrary precision).

We will use Lemma~\ref{lemma:LowerBound:1} with $d=2$ and three instances $\Gamma,\Gamma_+,\Gamma_-$.
First, we will describe the construction of $\Gamma_+$ and $\Gamma_-$ given an instance $\Gamma$.
After stating some properties of this construction, we will define the instance $\Gamma$. 

\myparagraph{Instances $\Gamma_+$ and $\Gamma_-$}
Suppose that $\Gamma=(c,0,\bmax)\in\I(\calH)$.
We set $\Gamma_+=(c^+,0,\bmax)$ and  $\Gamma_-=(c^-,0,\bmax)$ where functions $c^+,c^-$ are given by
\begin{eqnarray*}
c^+_h  =  c_{h}\cdot e^{h\nu},\qquad 
c^-_h  =  c_{h}\cdot e^{-h\nu}\qquad\qquad \forall h\in\mathbb R
\end{eqnarray*}
where $\nu>0$ is some constant.
Let $Z(\cdot)$, $Z_+(\cdot)$, $Z_-(\cdot)$ be the partition functions corresponding to $\Gamma$, $\Gamma_+$, $\Gamma_-$, respectively. One can check that
$$
Z(\beta)=\sum_{h\in{\tt supp}(c)} c_h e^{-\beta h}\qquad Z_+(\beta)=Z(\beta-\nu)\qquad Z_-(\beta)=Z(\beta+\nu)
$$
Denote 
$z(\beta)=\ln Z(\beta)$ and $\zdiff(\beta)=z(\beta)-z(\bmax+\beta)$. Then
$$
q=q^\ast[\Gamma]=\zdiff(0)\qquad
q^\ast[\Gamma_+]=\zdiff(-\nu)\qquad
q^\ast[\Gamma_-]=\zdiff(\nu)
$$
Condition $\Gamma^+,\Gamma^-\in\I(\calH,\mathbb R\setminus(q\pm 2\varepsilon))$ can thus be written as follows:
\begin{eqnarray}
| \zdiff(\pm \nu)- \zdiff(0)|>2\varepsilon
\label{eq:GBAKJSHAKSJGBA}
\end{eqnarray}
Condition~\eqref{eq:lemma:LowerBound:1} after cancellations becomes
$$
\frac{Z^2(\beta)}{Z(\beta-\nu)Z(\beta+\nu)}\ge \gamma\qquad\quad\forall \beta\in\mathbb R
$$
or equivalently
\begin{eqnarray}
z(\beta-\nu)-2z(\beta)+z(\beta+\nu)\le \ln\frac{1}{\gamma} \qquad\quad\forall \beta\in\mathbb R
\label{eq:ALSKGAKSJFAKGA}
\end{eqnarray}
Let us define the following quantities; note that they depend only on instance $\Gamma$:
\begin{equation}
\rho=|\zdiff'(0)|\qquad\qquad
\kappa=\sup_{\beta\in\mathbb R}z''(\beta)
\end{equation}
\begin{lemma}\label{lemma:LowerBound:2}
Let $\Gamma$ be an instance with values $q=q^\ast(\Gamma),\rho,\kappa$ as described above.
Fix $\varepsilon\in(0,\frac{\rho^2}{10\kappa})$.
Suppose that
 algorithm $\calA$ is an $(\varepsilon,\delta)$-estimator for all instances in $\I(\calH,q\pm 4\varepsilon)$.
Then for any constant $\delta'\in[\delta,1-\delta]$ we have
$$
T^\calA(\Gamma)\ge \frac{2(1-\delta'-\delta)\rho^2\ln (\delta'/\delta)}{9\kappa\varepsilon^2}
$$

\end{lemma}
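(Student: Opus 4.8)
The plan is to apply Lemma~\ref{lemma:LowerBound:1} with $d=2$ to the triple $\Gamma,\Gamma_+,\Gamma_-$, where $\Gamma_+,\Gamma_-$ are the shifted instances constructed above for the specific choices $\nu=\frac{3\varepsilon}{\rho}$ and $\gamma=e^{-\nu^2\kappa}=e^{-9\kappa\varepsilon^2/\rho^2}$. Note that $\kappa>0$ (since $H(\cdot)\ne const$ makes $z(\cdot)$ strictly convex), so $\gamma\in(0,1)$. Granting that the hypotheses of Lemma~\ref{lemma:LowerBound:1} hold for these choices, its conclusion with $d=2$ reads, for any constant $\delta'\in[\delta,1-\delta]$,
$$
T^\calA(\Gamma)\;\ge\;\frac{2(1-\delta'-\delta)\ln(\delta'/\delta)}{\ln(1/\gamma)}\;=\;\frac{2(1-\delta'-\delta)\rho^2\ln(\delta'/\delta)}{9\kappa\varepsilon^2},
$$
which is exactly the claimed bound. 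Thus the entire task reduces to checking three things for $\nu=3\varepsilon/\rho$ and $\gamma=e^{-\nu^2\kappa}$: (i)~$\calA$ is an $(\varepsilon,\delta)$-estimator for $\Gamma$ and for $\Gamma_+,\Gamma_-$; (ii)~$\Gamma_+,\Gamma_-\in\I(\calH,\mathbb R\setminus(q\pm2\varepsilon))$; and (iii)~the product inequality~\eqref{eq:lemma:LowerBound:1}, which after cancellation is~\eqref{eq:ALSKGAKSJFAKGA}.

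For~(iii) I would write the symmetric second difference of $z$ as a double integral of $z''$, so that convexity of $z$ and the definition of $\kappa$ give $0\le z(\beta-\nu)-2z(\beta)+z(\beta+\nu)\le\nu^2\kappa=\ln(1/\gamma)$ for every $\beta$; this is precisely~\eqref{eq:ALSKGAKSJFAKGA}. For~(i) and~(ii) the key quantity is $q^\ast[\Gamma_\pm]-q=\zdiff(\pm\nu)-\zdiff(0)$ (cf.~\eqref{eq:GBAKJSHAKSJGBA}). Since $\zdiff(\beta)=z(\beta)-z(\beta+\bmax)$ and $0\le z''\le\kappa$, we have $\zdiff''(\beta)\in[-\kappa,\kappa]$, so a first-order Taylor expansion with integral remainder around $0$ yields
$$
\nu\rho-\tfrac{\kappa\nu^2}{2}\;\le\;\bigl|\zdiff(\pm\nu)-\zdiff(0)\bigr|\;\le\;\nu\rho+\tfrac{\kappa\nu^2}{2},
$$
using $|\zdiff'(0)|=\rho$. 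With $\nu=3\varepsilon/\rho$ one has $\nu\rho=3\varepsilon$, while $\varepsilon<\frac{\rho^2}{10\kappa}$ forces $\frac{\kappa\nu^2}{2}=\frac{9\kappa\varepsilon^2}{2\rho^2}<\frac{9\varepsilon}{20}$. Hence $2\varepsilon<|\zdiff(\pm\nu)-\zdiff(0)|<4\varepsilon$, which simultaneously shows $\Gamma_\pm\in\I(\calH,q\pm4\varepsilon)$ — so the estimator hypothesis of the present lemma applies to them, giving~(i) — and $\Gamma_\pm\in\I(\calH,\mathbb R\setminus(q\pm2\varepsilon))$, giving~(ii). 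The reweighting $c^\pm_h=c_h e^{\pm h\nu}$ leaves ${\tt supp}(c)$ and $\bmax$ unchanged, so the structural requirements $\bmax[\Gamma_\pm]=\bmax[\Gamma]$ and ${\tt supp}(c[\Gamma_\pm])={\tt supp}(c[\Gamma])$ of Lemma~\ref{lemma:LowerBound:1} hold as well.

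The only genuinely delicate point is the constant bookkeeping: the factor $10$ in the hypothesis $\varepsilon<\frac{\rho^2}{10\kappa}$ has to be large enough that, with the margin $\nu\rho=3\varepsilon$, the quadratic Taylor error $\frac{\kappa\nu^2}{2}$ stays comfortably below $\varepsilon$ (so that $\nu\rho\pm\frac{\kappa\nu^2}{2}$ lands strictly inside $(2\varepsilon,4\varepsilon)$), while $\ln(1/\gamma)=\nu^2\kappa$ remains small enough to produce the claimed constant $\frac{2}{9}$. Once $\nu=3\varepsilon/\rho$ is fixed everything else is direct substitution; I do not expect any conceptual obstacle beyond this constant-chasing together with the routine integral-remainder estimates for $z''$ and $\zdiff''$.
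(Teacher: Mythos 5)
Your proposal is correct and follows essentially the same route as the paper's proof: the same choice $\nu=3\varepsilon/\rho$, the same $\gamma$ with $\ln(1/\gamma)=9\kappa\varepsilon^2/\rho^2$, the bound $z(\beta-\nu)-2z(\beta)+z(\beta+\nu)\le\kappa\nu^2$, the Taylor estimate placing $|\zdiff(\pm\nu)-\zdiff(0)|$ strictly inside $(2\varepsilon,4\varepsilon)$, and then Lemma~\ref{lemma:LowerBound:1} with $d=2$. The only (harmless) deviation is that you bound $|\zdiff''|\le\kappa$ rather than the paper's cruder $2\kappa$, and use the integral remainder instead of the Lagrange form and mean value theorem, which only improves the constant bookkeeping.
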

\begin{proof}
Non-negativity of function $c$ implies that function $z(\cdot)$ is convex,
and so $z''(\beta)\in[0,\kappa]$ for all $\beta\in\mathbb R$.
Define $\nu=3\varepsilon/\rho$. For $\beta=\pm \nu$ we can write
$$
|\zdiff(\beta)-\zdiff(0)|
\;\stackrel{\mbox{\tiny(a)}}=\; \left|\zdiff'(0)\beta+\zdiff''(\tilde\beta)\frac{\beta^2}{2}\right|
\;\stackrel{\mbox{\tiny(b)}}\in\; [|\rho \beta| - \kappa \beta^2,|\rho \beta| + \kappa \beta^2]
$$
where in (a) we used Taylor's theorem with the Lagrange form of the remainder (here $\tilde\beta\in\mathbb R$),
and in (b) we used the fact that $|\zdiff''(\tilde\beta)|=|z''(\tilde\beta)-z''(\bmax+\tilde\beta)|\le 2\kappa$.
Observing that $|\rho\beta|=3\varepsilon$ and $\kappa\beta^2=\varepsilon\cdot \frac{9\kappa\varepsilon}{\rho^2}<\varepsilon$,
we get $|\zdiff(\beta)-\zdiff(0)|\in(2\varepsilon,4\varepsilon)$.
Thus, condition~\eqref{eq:GBAKJSHAKSJGBA} holds, and $\Gamma_+,\Gamma_-\in\I(\calH,q\pm 4\varepsilon)$.

Denote $f(\beta)=z(\beta)-z(\beta-\nu)$. Using twice the mean value theorem, we get
$$
z(\beta-\nu)-2z(\beta)+z(\beta+\nu)
= f(\beta+\nu)-f(\beta) 
= f'(\tilde\beta)\nu
=[z'(\tilde\beta)-z'(\tilde\beta-\nu)]\nu
=z''(\tilde{\tilde\beta})\nu^2
\le \kappa\nu^2
$$
where $\tilde\beta,\tilde{\tilde\beta}\in\mathbb R$.
Thus, condition~\eqref{eq:ALSKGAKSJFAKGA} will be satisfied if we set $\gamma\in(0,1)$ so that
$
\ln\frac{1}{\gamma}=\kappa\nu^2=\frac{9\kappa\varepsilon^2}{\rho^2}
$. Lemma~\ref{lemma:LowerBound:2} now follows from  Lemma~\ref{lemma:LowerBound:1}.
\end{proof}

\myparagraph{Instance $\Gamma$}  We now need to construct instance $\Gamma$
such that $q=q^\ast[\Gamma]$ is close to a given value $\bar q$, and the ratio $\frac{\rho^2}{\kappa}$ is large.
We will use an instance with the following partition function:
\begin{eqnarray}
Z(\beta)&=&e^{-\beta}\prod_{k=1}^{N} (a_k+e^{-\beta/m}) 
\end{eqnarray}
where $N$ is some integer in $[m(n-1)]$ and $a_1,\ldots,a_N$ are non-negative numbers. 
Expanding terms yields $Z(\beta)=\sum_{h\in\calH_n^m }c_h e^{-\beta h}$
for some coefficients $c_h\ge 0$, so this is indeed a valid definition of an instance $\Gamma\in\I(\calH_n^m)$. 
In Section~\ref{sec:lemma:LowerBound:3} we prove the following fact.

\begin{lemma}\label{lemma:LowerBound:3}
There exist  values $a_1,\ldots,a_N,\bmax>0$ such that
$q=\frac{\ln 2}2 N^2 \pm O(mN)$ and $\frac{\rho^2}{\kappa}>(\frac{N}4-1)^2$.
\end{lemma}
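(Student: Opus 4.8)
The plan is to build the instance $\Gamma$ explicitly. I would choose the coefficients geometrically with base $2$ and take $\bmax$ proportional to $mN$: set $a_k=2^{-k}$ for $k\in[N]$ and $\bmax=mN\ln 2$, so that
$$
z(\beta)=\ln Z(\beta)=-\beta+\sum_{k=1}^N\ln\bigl(2^{-k}+e^{-\beta/m}\bigr).
$$
First I would verify this is a legal instance in $\I(\calH_n^m)$: expanding $\prod_{k=1}^N(2^{-k}+e^{-\beta/m})$ over subsets $S\subseteq[N]$ shows $Z(\beta)=\sum_h c_h e^{-\beta h}$ with $c_h\ge 0$, the support of $c$ being $\{1+i/m:0\le i\le N\}$, which lies in $[1,1+N/m]\subseteq[1,n]$ (using $N\le m(n-1)$) and satisfies $mh\in\mathbb Z$.

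The point of the geometric spacing is that it keeps $z''$ uniformly of order $1/m^2$ while leaving $\zdiff'(0)$ of order $N/m$. I would compute $z''(\beta)=\tfrac1{m^2}\sum_{k=1}^N g\bigl(2^{-k}e^{\beta/m}\bigr)$ where $g(u)=\tfrac{u}{(1+u)^2}\le\min\{u,1/u\}=e^{-|\ln u|}$; writing $2^{-k}e^{\beta/m}=e^{\beta/m-k\ln 2}$ and summing a two-sided geometric series gives $\sum_{k\in\mathbb Z}e^{-|\beta/m-k\ln 2|}\le\tfrac{1+e^{-\ln 2}}{1-e^{-\ln 2}}=3$, hence $\kappa=\sup_\beta z''(\beta)\le 3/m^2$, independently of $N$. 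For $\rho$ I would use $\rho=|\zdiff'(0)|=z'(\bmax)-z'(0)=\tfrac1m\sum_{k=1}^N\bigl(\tfrac1{2^{-k}+1}-\tfrac1{2^{N-k}+1}\bigr)$, bound the first term below by $1-2^{-k}$ and the tail $\sum_{j=0}^{N-1}\tfrac1{2^j+1}$ above by $\tfrac12+\sum_{j\ge1}2^{-j}=\tfrac32$, and obtain $\rho\ge(N-\tfrac52)/m$. Then $\tfrac{\rho^2}{\kappa}\ge\tfrac{(N-5/2)^2}{3}$, which exceeds $(\tfrac N4-1)^2$ for all sufficiently large $N$ by a one-line quadratic comparison.

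For the value of $q$ I would substitute $\beta=0$ and $\beta=\bmax$ (using $e^{-\bmax/m}=2^{-N}$) to get $q=\zdiff(0)=z(0)-z(\bmax)=mN\ln 2+\sum_{k=1}^N\ln\tfrac{2^{-k}+1}{2^{-k}+2^{-N}}$, then write $2^{-k}+2^{-N}=2^{-N}(2^{N-k}+1)$, reindex $j=N-k$, and use $\ln(1+2^j)=j\ln 2+\ln(1+2^{-j})$ for $j\ge 0$. The bounded $\ln(1+2^{-\cdot})$ contributions are $O(1)$, the arithmetic sum $\ln 2\sum_{j=0}^{N-1}j=\tfrac{\ln2}{2}(N^2-N)$ supplies the leading term, and collecting everything gives $q=\tfrac{\ln2}{2}N^2+mN\ln 2+\tfrac{\ln2}{2}N+O(1)=\tfrac{\ln2}{2}N^2\pm O(mN)$.

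I expect the genuine obstacle to be the design of the construction itself: one has to see that spacing the $a_k$ geometrically — rather than, say, taking them all equal, which forces $\kappa=\Theta(N/m^2)$ — spreads the curvature of $z$ over a long range of $\beta$, so that $\kappa$ stays $O(1/m^2)$ while $\rho$ remains $\Theta(N/m)$, yielding $\rho^2/\kappa=\Theta(N^2)$; and that base exactly $2$ simultaneously pins the leading coefficient of $q$ to $\tfrac{\ln 2}{2}$. Once the parameters are fixed the remaining steps — the geometric-series bound for $\kappa$, the monotonicity bound for $\rho$, and the Euler-type expansion of the logarithmic sum for $q$ — are routine.
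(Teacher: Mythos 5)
Your construction is the paper's own up to an immaterial index shift (the paper takes $a_k=2^{1-k}$ and $\eta=2^{1-N}$), and all three estimates — the geometric-series bound on $\kappa$, the lower bound $\rho\ge (N-\tfrac52)/m$, and the expansion of $q$ — check out. The only local difference is how you bound $\kappa$: where the paper proves a separate localization lemma ($\kappa\le\max_\ell\kappa_\ell$ by placing $u$ in $[a_{\ell+1},a_\ell]$ and comparing to endpoint values), you use the pointwise bound $g(v)\le e^{-|\ln v|}$ and sum a two-sided geometric series, which is slightly cleaner and yields $3/m^2$ in place of the paper's $4/m^2$.
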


This will imply Theorem~\ref{th:LowerBound}.
Indeed, let $\bar q$ be the value chosen in Theorem~\ref{th:LowerBound}.
Set $\hat N=\sqrt{\frac{2}{\ln 2}\bar q}$ and
    $N=\left\lceil {\hat N} \right\rceil$. 
Note that
$$
\frac{\hat N}{m(n-1)} \le \frac{\sqrt{\frac{2}{\ln 2}\bar q}}{\frac{c_2\sqrt{\bar q}}{n}(n-1)}
= const\cdot \frac{n}{n-1} \mbox{~~~~~~with~~~~~}const  =      \mbox{$\sqrt{   \frac{2}{\ln 2}  }$} \; / \; c_2
$$
Thus, setting $c_2 > \sqrt{   \frac{2}{\ln 2}  }$ will ensure that $N\in[m(n-1)]$ for sufficiently large $n$.

We have $q=\frac{\ln 2}2 N^2 \pm O(mN)=\frac{\ln 2}2 \hat N^2 \pm O(m\hat N)=\bar q\pm O(m \sqrt{\bar q})=\bar q\left(1\pm O\left(\frac{m}{\sqrt{\bar q}}\right)\right)$.
Recalling that $m=\left\lceil \frac{c_2\sqrt{q}}{n} \right\rceil$, we
conclude that 
$q\in\left[\frac{3\bar q}{4},\frac{5\bar q}{4}\right]$ if $\bar q,n$ are sufficiently large.
Furthermore, we have $\frac{\rho^2}{\kappa}>(\frac{N}4-1)^2>\frac 16 \bar q$ if $\bar q$ is sufficiently large
(note that $\frac 16<\frac{1}{8\ln 2}$).

We set $c_1=\frac{1}{60}$, so that $\varepsilon\in(0,\frac{1}{60}\bar q)$. Now suppose that the preconditions of Theorem~\ref{th:LowerBound} hold.
It can be checked that $\varepsilon\in (0,\frac{\rho^2}{10\kappa})$ and $q\pm 4\varepsilon\subseteq \left[\frac{2\bar q}3,\frac{4\bar q}3\right]$,
so the preconditions of Lemma~\ref{lemma:LowerBound:2} hold as well.
Setting $\delta'=\frac{1}{2}$ and recalling that $\delta\in(0,\frac{1}{4})$, we obtain the desired result:
$$
T^\calA(\Gamma)\ge \frac{2(1-\frac{1}{2}-\delta)\ln (\frac{1}{2}/\delta)}{9\varepsilon^2}\cdot \frac{\rho^2}{\kappa}
\;\ge\;\frac{\ln \delta^{-1}-\ln 2}{18\varepsilon^2}\cdot \frac{\bar q}6
\;\ge\;\frac{1-\frac{\ln 2}{\ln 4}}{18\cdot 6}\cdot \frac{\bar q\ln \delta^{-1}}{\varepsilon^2}
$$

\subsection{Proof of Lemma~\ref{lemma:LowerBound:3}}\label{sec:lemma:LowerBound:3}

Denote $u=u(\beta)=e^{-\beta/m}$ and $\eta=u(\bmax)=e^{-\bmax/m}$. (The choice of $\eta\in(0,1)$ will be specified later).
We can write
\begin{eqnarray}
z(\beta)=-\beta+\sum_{k=1}^N \ln (a_k+u) \qquad\quad
z'(\beta)=-\sum_{k=N}^N \frac{u}{m(a_k+u)} \qquad\quad
z''(\beta)=\sum_{k=1}^N \frac{a_ku}{m^2(a_k+u)^2}
\end{eqnarray}
\begin{eqnarray}
q&=&z(0)-z(\bmax)\;\;=\;\;m\ln\frac{1}{\eta}+\sum_{k=1}^N \ln \frac{a_k+1}{a_k+\eta} \\
\rho&=&|z'(0)-z'(\bmax)|\;\;=\;\;\frac{1}{m}\sum_{k=1}^N \left[\frac{1}{a_k+1}-\frac{\eta}{a_k+\eta}\right] 
\end{eqnarray}
As for $\kappa=\max_{\beta\in\mathbb R}z''(\beta)$, we will use the following bound.
\renewcommand{\ell}{r}
\begin{lemma}
Suppose that $a_1\ge a_2\ge \ldots \ge a_N> 0$. Then $\kappa\le\max_{\ell\in[N-1]}\kappa_\ell$
where we denoted
$$
\kappa_{\ell}
\;\;=\;\; \frac{1}{m^2}\left[\sum_{k=1}^{\ell} \frac{a_{\ell}}{a_k}
         \;+\; \sum_{k=\ell+1}^N \frac{a_k}{a_{\ell+1}}\right]
$$
\end{lemma}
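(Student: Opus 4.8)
The plan is to bound $z''(\beta)=\sum_{k=1}^N \frac{a_k u}{m^2(a_k+u)^2}$ uniformly over $u=e^{-\beta/m}\in(0,\infty)$ by splitting the sum at an index $r$ that depends on where $u$ sits relative to the (sorted) sequence $a_1\ge\cdots\ge a_N>0$. The basic inequality I would use is the elementary fact $\frac{a_k u}{(a_k+u)^2}\le \min\{\frac{u}{a_k},\frac{a_k}{u}\}$, which follows from $(a_k+u)^2\ge 4a_k u\ge a_k u$ applied two ways (drop $u$ or drop $a_k$ from the square in the denominator, i.e.\ $(a_k+u)^2\ge u^2$ gives $\frac{a_k u}{(a_k+u)^2}\le\frac{a_k}{u}$, and $(a_k+u)^2\ge a_k^2$ gives $\le\frac{u}{a_k}$). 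So for each term we may choose whichever of the two bounds is smaller.

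First I would fix $u>0$ and let $r=r(u)\in\{0,1,\ldots,N\}$ be the number of indices $k$ with $a_k\ge u$; since the $a_k$ are sorted in decreasing order these are exactly $k=1,\ldots,r$. For $k\le r$ use $\frac{a_k u}{(a_k+u)^2}\le \frac{u}{a_k}\le\frac{u}{a_r}$... wait, more carefully: for $k\le r$ we have $a_k\ge u$, so $\frac{a_ku}{(a_k+u)^2}\le\frac{u}{a_k}$, and I want to compare this to $\frac{a_r}{a_k}$; since $u\le a_r$, indeed $\frac{u}{a_k}\le\frac{a_r}{a_k}$. For $k\ge r+1$ we have $a_k\le u$ (using $a_k\le a_{r+1}\le u$), so $\frac{a_ku}{(a_k+u)^2}\le\frac{a_k}{u}\le\frac{a_k}{a_{r+1}}$. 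Summing gives $m^2 z''(\beta)\le \sum_{k=1}^r\frac{a_r}{a_k}+\sum_{k=r+1}^N\frac{a_k}{a_{r+1}}=m^2\kappa_r$, i.e.\ $z''(\beta)\le\kappa_r$, provided $1\le r\le N-1$.

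The remaining step is to handle the boundary cases $r=0$ (i.e.\ $u>a_1$, so all $a_k<u$) and $r=N$ (i.e.\ $u\le a_N$, so all $a_k\ge u$). For $r=0$: using $\frac{a_ku}{(a_k+u)^2}\le\frac{a_k}{u}$ for every $k$ and $u>a_1\ge a_2$ gives $m^2z''(\beta)\le\sum_{k=1}^N\frac{a_k}{u}\le\sum_{k=1}^N\frac{a_k}{a_2}$, which is at most $m^2\kappa_1=\sum_{k=1}^1\frac{a_1}{a_k}+\sum_{k=2}^N\frac{a_k}{a_2}=1+\sum_{k=2}^N\frac{a_k}{a_2}$ since the missing $\frac{a_1}{a_2}$ term is replaced by the larger or equal... actually $\frac{a_1}{a_2}\ge 1$, so I should instead bound $\frac{a_1}{u}\le 1$ (as $u>a_1$) so $m^2z''(\beta)\le 1+\sum_{k=2}^N\frac{a_k}{u}\le 1+\sum_{k=2}^N\frac{a_k}{a_2}=m^2\kappa_1$. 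Symmetrically, for $r=N$ use $\frac{a_ku}{(a_k+u)^2}\le\frac{u}{a_k}$ for every $k$; with $u\le a_N\le a_{N-1}$, the $k=N$ term is $\le 1$ and $\sum_{k=1}^{N-1}\frac{u}{a_k}\le\sum_{k=1}^{N-1}\frac{a_{N-1}}{a_k}$, giving $m^2z''(\beta)\le\sum_{k=1}^{N-1}\frac{a_{N-1}}{a_k}+1=m^2\kappa_{N-1}$. Taking the supremum over $\beta$ (equivalently over $u>0$) and the max over the finitely many $r$ that arise then yields $\kappa\le\max_{r\in[N-1]}\kappa_r$.

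The only mildly delicate point — the main obstacle, such as it is — is the bookkeeping at the two boundary cases $r\in\{0,N\}$, where the naive split produces a sum that must be absorbed into $\kappa_1$ or $\kappa_{N-1}$; the trick is to notice that the "extreme" term ($\frac{a_1}{u}$ or $\frac{u}{a_N}$) is itself bounded by $1$, which is exactly the value of the diagonal term $\frac{a_r}{a_r}=1$ that would otherwise be missing. Everything else is the elementary two-sided bound on $\frac{a_ku}{(a_k+u)^2}$ together with monotonicity of the sorted $a_k$.
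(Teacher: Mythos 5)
Your proof is correct and follows essentially the same route as the paper: split the sum at the index $\ell$ with $u\in[a_{\ell+1},a_\ell]$ and bound the terms by $a_\ell/a_k$ and $a_k/a_{\ell+1}$ respectively. The only cosmetic difference is that the paper first restricts to $u\in[a_N,a_1]$ using the unimodality of each summand $g_k$ (so your boundary cases $r=0$ and $r=N$ never arise), whereas you handle them directly by absorbing the extreme term into the diagonal term of $\kappa_1$ or $\kappa_{N-1}$; both are valid.
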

\begin{proof}
We need to show that 
$$
\sum_{k=1}^N \frac{a_ku}{m^2(a_k+u)^2} 
\;\;\le\;\;   \max_{\ell\in[N-1]}\kappa_{\ell}
\qquad\qquad\forall u\in(0,+\infty)
$$
By taking the derivative one can check that function $g_k(u)=\frac{a_ku}{(a_k+u)^2}$ is increasing on $[0,a_k]$ and decreasing on $[a_k,+\infty)$
(with the maximum at $u=a_k$). Therefore, function $g(u)=\sum_{k=1}^N g_k(u)$ attains a maximum at $[a_N,a_1]$. We can thus assume
w.l.o.g.\ that $u\in[a_N,a_1]$.

Let $\ell\in[N-1]$ be an index such that $u\in[a_{\ell+1},a_\ell]$. For $k\in[1,\ell]$ we have
$
g_k(u)\le g_k(a_\ell)=\frac{a_\ell/a_k}{(1+a_\ell/a_k)^2}\le \frac{a_\ell}{a_k}
$,
and for $k\in[\ell+1,N]$ we have
$
g_k(u)\le g_k(a_{\ell+1})=\frac{a_k/a_{\ell+1}}{(1+a_k/a_{\ell+1})^2}\le \frac{a_k}{a_{\ell+1}}
$. By summing these inequalities we get that $g(u)\le m^2\kappa_\ell$.
\end{proof}

We can now prove Lemma~\ref{lemma:LowerBound:3}. Define
$
a_k=2^{1-k}$ and $\eta=2^{1-N}
$.
For each $k\in[N]$ we have $\ln\frac{a_k+1}{a_k+\eta}\ge \ln\frac{1}{2a_k} = (k-2)\ln 2$ and
 $\ln\frac{a_k+1}{a_k+\eta}<\ln\frac{a_k+1}{a_k} = \ln \frac{1}{a_k} + \ln (1+a_k) \le \ln \frac{1}{a_k} + a_k=(k-1)\ln 2+2^{1-k}$,
therefore
\begin{eqnarray*}
 q&>& m(N-1)\ln 2 + \sum_{k=1}^N (k-2)\ln 2 \;\;=\;\;  \left(m+\frac{N}{2}\right)(N-1)\ln 2 - N\ln 2 \\
 q&<& m(N-1)\ln 2 + \sum_{k=1}^N \left[  (k-1)\ln 2+2^{1-k} \right] \;\;<\;\; \left(m+\frac{N}{2}\right)(N-1)\ln 2 + 2 
\end{eqnarray*}
The following inequalities imply the last two claims of Lemma~\ref{lemma:LowerBound:3}:
\begin{eqnarray*}
\rho&>&\frac{1}{m}\sum_{k=1}^N\left[\frac{1}{1+1}-\frac{\eta}{a_k}\right]
\;\;=\;\;\frac{1}{m}\left[\frac{N}2-\frac{2^N-1}{2^{N-1}}\right]
\;\;>\;\; \frac{1}{m}\left[\frac{N}2-2\right]
\\
\kappa_\ell&<&
\frac{1}{m^2}\left[\sum_{k=1}^\ell\frac{a_\ell}{a_k}+
\sum_{k=\ell+1}^{+\infty} \frac{a_k}{a_{\ell+1}}\right]
\;\;=\;\;\frac{1}{m^2}\left[\frac{2^\ell-1}{2^{\ell-1}}+2\right]
\;\;<\;\; \frac{4}{m^2}\qquad\qquad \forall \ell\in[N-1]
\end{eqnarray*}

\section*{Acknowledgements}
I thank Laszlo Erd\"os and Alexander Zimin   for useful discussions.
In particular, the link~\cite{link:PoissonQuestion}  provided by Alexander
helped with the argument in Section~\ref{sec:lemma:delta-bound:proof}.
The author is supported by the European Research Council under the European Unions Seventh Framework Programme (FP7/2007-2013)/ERC grant agreement no 616160.

\small
\bibliographystyle{plain}
\bibliography{sampling}

\end{document}